\newcommand{\be}{\begin{equation}}
\newcommand{\ee}{\end{equation}}
\newcommand{\ba}{\begin{eqnarray}}
\newcommand{\ea}{\end{eqnarray}}
\newcommand\tr{{\operatorname{tr}}}
\newcommand{\ignore}[1]{}
\newcommand{\otoc}[0]{\text{OTOC}}
\newcommand{\haar}[0]{\mathcal U(d)}
\newcommand{\ket}[1]{\left | {#1} \right \rangle }
\newcommand{\bra}[1]{\left \langle {#1} \right | }
\newcommand{\de}[0]{{\operatorname{d}}}
\newcommand{\expf}[1]{\mathrm{exp}\left ( {#1}\right )}
\newcommand{\parent}[1]{\left( {#1} \right)}
\newcommand{\aver}[1]{ \left\langle  {#1}  \right\rangle }
\newcommand{\spann}[1]{\mathrm{span} \parent{{#1}}}
\newcommand{\sym}{\text{sym}}
\newcommand{\pur}{\operatorname{Pur}}
\newcommand{\kerf}{\operatorname{ker}}
\def\norm#1{\Vert #1\Vert}
\def\CC{{\rm\kern.24em \vrule width.04em height1.46ex depth-.07ex
    \kern-.29em C}}
\def\P{{\rm I\kern-.25em P}}
\def\RR{{\rm
         \vrule width.04em height1.58ex depth-.0ex
         \kern-.04em R}}
\def\bbbone{{\mathchoice {\rm 1\mskip-4mu l} {\rm 1\mskip-4mu l}
{\rm 1\mskip-4.5mu l} {\rm 1\mskip-5mu l}}}
\def\bbbc{{\mathchoice {\setbox0=\hbox{$\displaystyle\rm C$}\hbox{\hbox
to0pt{\kern0.4\wd0\vrule height0.9\ht0\hss}\box0}}
{\setbox0=\hbox{$\textstyle\rm C$}\hbox{\hbox
to0pt{\kern0.4\wd0\vrule height0.9\ht0\hss}\box0}}
{\setbox0=\hbox{$\scriptstyle\rm C$}\hbox{\hbox
to0pt{\kern0.4\wd0\vrule height0.9\ht0\hss}\box0}}
{\setbox0=\hbox{$\scriptscriptstyle\rm C$}\hbox{\hbox
to0pt{\kern0.4\wd0\vrule height0.9\ht0\hss}\box0}}}}
\def\bbbz{{\mathchoice {\hbox{$\sf\textstyle Z\kern-0.4em Z$}}
{\hbox{$\sf\textstyle Z\kern-0.4em Z$}}
{\hbox{$\sf\scriptstyle Z\kern-0.3em Z$}}
{\hbox{$\sf\scriptscriptstyle Z\kern-0.2em Z$}}}}
\newtheorem{theorem}{Theorem}
\newtheorem{corollary}{Corollary}
\newtheorem{lemma}{Lemma}
\newtheorem{remark}{Remark}
\newtheorem{application}{Application}
\newtheorem{prop}{Proposition}
\begin{document}
\setcounter{secnumdepth}{3}


\title{Quantum Chaos is Quantum} 
\author{Lorenzo Leone}\email{Lorenzo.Leone001@umb.edu}
\affiliation{Physics Department,  University of Massachusetts Boston,  02125, USA}
\author{Salvatore F.E. Oliviero}
\affiliation{Physics Department,  University of Massachusetts Boston,  02125, USA}
\author{You Zhou}
\affiliation{School of Physical and Mathematical Sciences, Nanyang Technological University, 637371, Singapore}
\affiliation{Department of Physics, Harvard University, Cambridge, Massachusetts 02138, USA}
\author{Alioscia Hamma}
\affiliation{Physics Department,  University of Massachusetts Boston,  02125, USA}
\begin{abstract}
It is well known that a quantum circuit on $N$ qubits composed of Clifford gates with the addition of $k$ non Clifford gates can be simulated on a classical computer by an algorithm scaling as $\operatorname{poly}(N)\expf{k}$\cite{bravyi2016improved}. We show that, for a quantum circuit to simulate quantum chaotic behavior, it is both necessary and sufficient that $k=\Theta(N)$. This result implies the impossibility of simulating quantum chaos on a classical computer.
\end{abstract}
\maketitle

\doparttoc
\faketableofcontents

\section*{Introduction}

Quantum chaos is a certain type of complex quantum behavior that results in the exponential decay of out-of-time-order correlation functions (OTOC)\cite{kitaev2014hidden,roberts2017chaos,harrow2020separation} efficient operator spreading\cite{nahum2018operator,khemani2018spreading}, small fluctuations of the purity\cite{Oliviero2020random} and information scrambling\cite{ding2016conditional,hosur2016chaos}. All these quantities can be unified in a single framework\cite{leone2020isospectral} which shows that, in order to simulate quantum chaos, one needs at least a unitary $4$-design, that is, a set of unitary operators that reproduces up to the four moments of the Haar distribution over the unitary group $\mathcal U(d)$ in an $d$-dimensional Hilbert space. {Here, we define  quantum chaos for a quantum evolution in terms of attaining the Haar value for general multi-point OTOC, that is, the value that would be reached by a random unitary operator in $\mathcal U (d)$. We consider a subgroup of the unitary group - the Clifford group - which only reproduces up to the four-point OTOC\cite{roberts2017chaos} and thus it is not sufficient to simulate quantum chaotic evolutions.} In \cite{zhou2020single}, it was shown numerically that a Clifford circuit on a $d=2^N$-dimensional system of $N$ qubits doped by a single $T$ gate can bring a typical product state in an entangled state with the same entanglement spectrum statistics resulting from the random matrix theory for $\mathcal U(d)$. This result opens the question of whether it would be possible to simulate quantum chaos with classical resources. In a seminal paper\cite{gross2020quantum}, the authors show that an $\epsilon$-approximate $t$-design can be obtained by doping a Clifford circuit with $k=O(t^4\log^2t\log\epsilon^{-1})$ non Clifford gates. In particular, one can $\epsilon$-simulate the quantum channel that realizes a $4$-design by classical resources. This result is striking: by injecting a vanishing density $\sigma=k/N$ of non Clifford gates in a Clifford quantum circuit - as the authors say, {\em homeopathically} - one can obtain any $\epsilon$-approximate $t$-design.
Does this mean that one can simulate quantum chaos classically? The answer is no, because - as we will show - to simulate quantum chaos, the error $\epsilon$ must be exponentially small in $N$, $\epsilon =O(d^{-\alpha})$, where $\alpha$ only depends on the Haar average over the full unitary group. A corollary of the result in\cite{gross2020quantum} is that a {\em sufficient} condition to simulate quantum chaos requires $O(N)$ non Clifford resources. 

In this paper, we show that $\Theta(N)$ non Clifford resources are {\em both necessary and sufficient} to simulate quantum chaos. To this end, we explicitly compute the $8$-point OTOC and the fluctuations of the purity in a subsystem and show that a doped Clifford circuit will attain the Haar values for these quantities if and only if $\Theta(N)$ non Clifford resources are used. 
In other words, one needs more than a homeopathic dose of non Clifford gates to simulate quantum chaos. Can a classical computer simulate quantum chaos? In order to simulate a Clifford circuit with a $\Theta(N)$ non-Clifford resources, an exponential number of classical resources are needed \cite{bravyi2016improved}. Complexity-theoretic arguments \cite{bremner2011classical, Harrow2017supremacy} imply that one cannot simulate efficiently on a classical computer a quantum Clifford circuit doped with $\Theta(N)$ non-Clifford gates, and therefore, since this is necessary to simulate quantum chaos, the latter cannot be efficiently simulated on a classical computer: quantum chaos is quantum.

\section{Doped Random Quantum Clifford Circuits}\label{sec:schemdop}
Consider doped random quantum Clifford circuits $U^{(k)}$ on a system 
$\mathcal{H}=\mathbb{C}^{2\otimes N}$ of $N$ qubits of dimension $d=2^N$. The architecture of the circuit is the following: we have layers of random Clifford unitary operators on the full $\mathcal{H}$ interspersed by a single qubit gate $K_i$ applied randomly on any qubit $i$, see Fig.\ref{fig1}. As we shall see in Sec. \ref{lemma}, the positioning $i$ of the gates $K$ does not play any role. We denote by $k$ the number of gates $K$ in the circuit, also called the number of {\em layers} of the circuit, $\psi$ a pure input state for the circuit, and $\psi_U = U\psi U^\dagger$ its output. We call the quantity $\sigma=k/N$ the {\em doping} of the circuit $U^{(k)}$.

\begin{figure}[h!]
    \centering
    \includegraphics[scale=0.1]{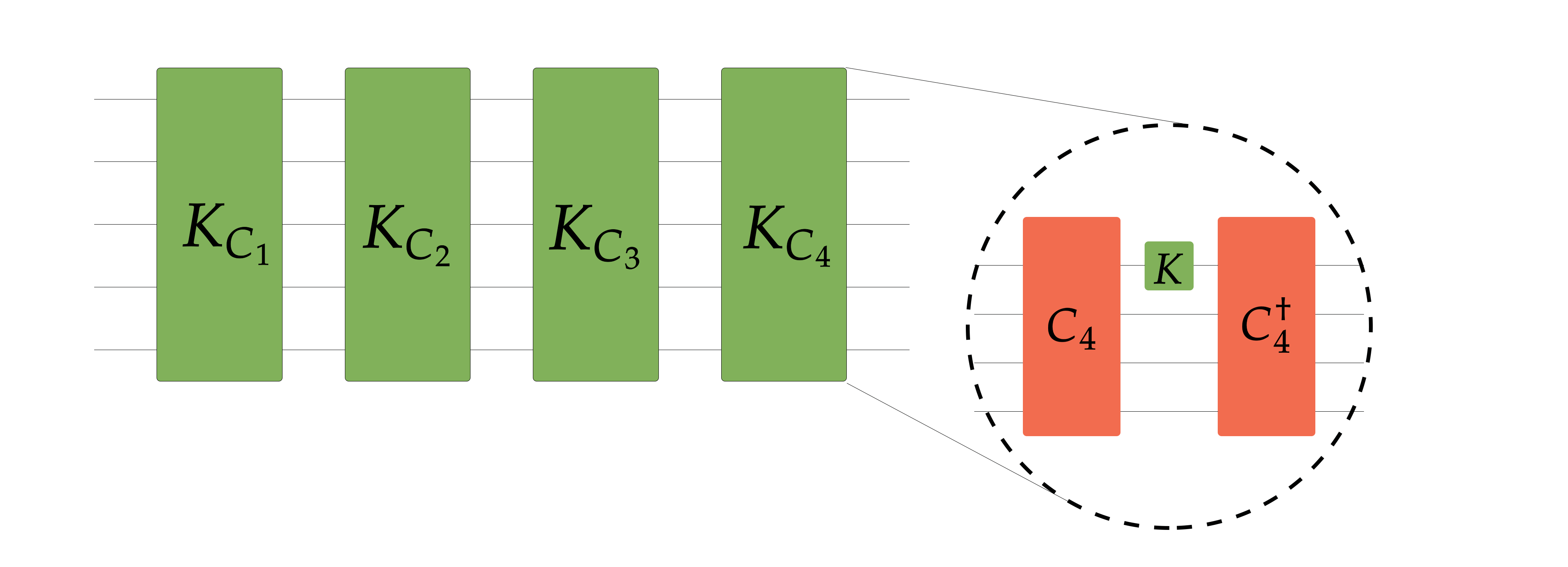}
    \caption{{\em Left}: Scheme of the $4$-Doped Clifford circuit. {\em Right}: Detail of $K_{C_{4}}$, a unitary single-qubit non Clifford gate $K$ evolved adjointly by a Clifford circuit $C_{4}$. Note that the set formed by these circuits is equivalent to the set of doped Clifford circuits, i.e circuits composed by Clifford unitaries $C_{i}$ interspersed with single-qubit non Clifford gates $K_{i}$.}
    \label{fig1}
\end{figure}

We denote by $x$ a set of unitary operators, e.g. $x= \mathcal U(d), \mathcal C(d)$ the unitary and Clifford group, respectively, on $\mathcal{H}$. For $k=0$, the circuit is just a Clifford circuit, $U^{(0)}\in\mathcal C(d)$. 
The Haar average on these sets will be denoted by $\langle\cdot\rangle_{U\in x}$. We define the $(x,t)$-fold channel as
\ba
\Phi_{x}^{(t)}(\mathcal{O}):=\aver{\mathcal{O}_{U}}_{U\in x}
\ea
where $\mathcal{O}\in\mathcal{B}(\mathcal{H}^{\otimes t})$ and $\mathcal{O}_{U}\equiv U^{\otimes t}\mathcal{O}U^{\dag\otimes t}$. 
Averaging over $\mathcal C(d)$ for a circuit $U^{(k)}$ with $k$ layers involves averaging over $k$ independent Clifford groups; in the following we define this set of circuits as $\mathcal{C}_k$. The $(\mathcal{C}_k,4)$-fold channel is
\be\label{groupav}
\Phi^{(4)}_{\mathcal C_k} (\mathcal O)= \left\langle \mathcal{O}_{U}\right\rangle_{U\in\mathcal{C}_k} \equiv \left\langle C_{k}^{\otimes 4}K_{i_k}^{\otimes 4}\dots C_{1}^{\otimes 4}K_{i_{1}}^{\otimes 4}C_{0}^{\otimes 4}\mathcal{O}C_{0}^{\dag\otimes 4}K_{i_1}^{\dag\otimes 4}C_{1}^{\otimes 4}\dots K_{i_{k}}^{\dag\otimes 4}C_{k}^{\dag\otimes 4}\right\rangle_{C_1\ldots C_k\in\mathcal{C}(d)}
\ee
Notice that the above average over $\mathcal C(d)$ is the same thing - because of the left/right invariance of group averages - than the average over circuits of the type sketched in Fig.\ref{fig1}.

Quantum chaos can be defined as an appropriate form of the butterfly effect\cite{roberts2016bound}: an exponential (in $N$) decay of the OTOCs defined as
\be
\otoc_{8}(U):=d^{-1}\tr\left(AB_{U}CD_{U}A^\dag D^\dag_{U}C^\dag B^\dag_{U}\right)
\label{8otocdef}
\ee
so that the OTOCs adhere to the value of the OTOCs obtained by Haar-random $U$ on the unitary group scaling with $d^{-4}$, while other ensembles, like the Clifford group, feature a scaling of $d^{-2}$\cite{roberts2017chaos}. It is immediate to see that, in order to distinguish the two types of scaling, one needs an $\epsilon =O(d^{-4})$. As  $2t$-OTOCs are probes of $t$-designs, an $8$-point OTOC is a probe of a $4$-design, and therefore a quantum chaotic channel needs to have a frame potential exponentially close to that of the Haar measure on $\mathcal U(d)$. 

A related measure of chaos\cite{leone2020isospectral} is given by the fluctuations of the purity of the reduced density matrix to a subsystem
 \ba\label{flucpurdef1}
 \Delta_{x}\pur(\psi_U)_A:= \left\langle (\pur(\psi_U)_A-\langle \pur(\psi_U)_A\rangle_{U\in x})^2\right\rangle_{U\in x}
\ea
This quantity is related to the emergent irreversibility in closed quantum systems\cite{chamon2014emergent} and to both $4$-designs and OTOCs. In Sec. \ref{flucpurity}, we show that the purity fluctuations are exponentially small for every doping (including no doping, $k=0$) of the random Clifford circuits and thus also to distinguish the fluctuations of the purity one needs an exponentially small error $\epsilon$. We ask the question: what is the necessary and sufficient number $k$ of non Clifford gates $K$ for $U\in\mathcal{C}_{k}$ to simulate quantum chaos?

The main goal of this paper is to show that, for $\mathcal{C}_{k}$ to reproduce the Haar-unitary values of the probes Eqs. \eqref{8otocdef} and \eqref{flucpurdef1}, $\Theta(N)$ non Clifford resources are both necessary and sufficient. We will prove it in the next sections by explicitly computing these two quantities. Here, we want to make some more general considerations. Given a probe to quantum chaos defined as $\mathcal{P}_t(U):=\tr(T^{(t)}\mathcal{O}_1 U^{\otimes t}\mathcal{O}_{2}U^{\dag\otimes t})$, see\cite{leone2020isospectral,Oliviero2020random}, we can establish the following
\begin{prop}
Let $\mathcal{P}_{t}(U)$ a probe of quantum chaos of order $t$. If the number $k$ of non Clifford gates in the doped Clifford circuit $U\in \mathcal{C}_k$ is $k=O((\alpha+t) Nt^{4}\log^2(t))$, then:
\be
\delta_{\mathcal{P}}^{(k)}\equiv\left|\aver{\mathcal{P}(U)}_{U\in\mathcal{C}_k}-\aver{\mathcal{P}(U)}_{U\in\mathcal{U}(d)}\right|\le O(d^{-\alpha})
\ee
where $\alpha$ is given by the following relation $\aver{\mathcal{P}(U)}_{U\in\mathcal{U}(d)}=O(d^{-\alpha})$.
\end{prop}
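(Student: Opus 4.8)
The plan is to deduce this directly from the result of Gross, Nezami and Walter \cite{gross2020quantum}: the doped Clifford ensemble $\mathcal{C}_k$ considered here (random Clifford layers interspersed with single-qubit non Clifford gates, the positioning being immaterial by the lemma of Sec.~\ref{lemma}) is an $\epsilon$-approximate unitary $t$-design, i.e.\ $\norm{\Phi^{(t)}_{\mathcal{C}_k}-\Phi^{(t)}_{\mathcal{U}(d)}}_\diamond\le\epsilon$, as soon as $k=O(t^4\log^2 t\,\log\epsilon^{-1})$. The only thing to pin down is how small $\epsilon$ must be for a probe $\mathcal{P}_t$ to inherit the accuracy $O(d^{-\alpha})$, after which $k$ is read off.

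First I would note that $\mathcal{P}_t(U)=\tr(T^{(t)}\mathcal{O}_1 U^{\otimes t}\mathcal{O}_2 U^{\dag\otimes t})=\tr(T^{(t)}\mathcal{O}_1(\mathcal{O}_2)_U)$ is a fixed linear functional of the $t$-fold channel, so averaging over an ensemble $x$ gives $\aver{\mathcal{P}_t(U)}_{U\in x}=\tr(T^{(t)}\mathcal{O}_1\,\Phi^{(t)}_x(\mathcal{O}_2))$. Writing $\Delta\Phi^{(t)}\equiv\Phi^{(t)}_{\mathcal{C}_k}-\Phi^{(t)}_{\mathcal{U}(d)}$ and using $|\tr(XY)|\le\norm{X}_\infty\norm{Y}_1$ together with the fact that the diamond norm dominates the $1\!\to\!1$ norm,
\be
\delta^{(k)}_{\mathcal{P}}=\big|\tr(T^{(t)}\mathcal{O}_1\,\Delta\Phi^{(t)}(\mathcal{O}_2))\big|\le\norm{T^{(t)}\mathcal{O}_1}_\infty\norm{\Delta\Phi^{(t)}(\mathcal{O}_2)}_1\le\epsilon\,d^{t}\,\norm{T^{(t)}}_\infty\norm{\mathcal{O}_1}_\infty\norm{\mathcal{O}_2}_\infty ,
\ee
where the last step uses $\norm{\mathcal{O}_2}_1\le d^{t}\norm{\mathcal{O}_2}_\infty$ on $\mathcal{B}(\mathcal{H}^{\otimes t})$. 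For the probes we actually use — the $8$-point OTOC \eqref{8otocdef} and the purity fluctuations \eqref{flucpurdef1} — $T^{(t)}$ is a permutation of the $t$ tensor factors and $\mathcal{O}_1,\mathcal{O}_2$ are products of Pauli operators, so all three operator norms equal $1$ and $\delta^{(k)}_{\mathcal{P}}\le\epsilon\,d^{t}$.

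It then suffices to take $\epsilon=O(d^{-(\alpha+t)})$, which yields $\delta^{(k)}_{\mathcal{P}}\le O(d^{-\alpha})$, matching the Haar value $\aver{\mathcal{P}(U)}_{\mathcal{U}(d)}=O(d^{-\alpha})$. Inserting this into the bound of \cite{gross2020quantum} and using $\log\epsilon^{-1}=(\alpha+t)\log d=O((\alpha+t)N)$,
\be
k=O\!\big(t^4\log^2 t\,\log\epsilon^{-1}\big)=O\!\big((\alpha+t)\,N\,t^4\log^2 t\big),
\ee
which is the claim.

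I expect the only genuinely delicate part to be bookkeeping. The two points to get right are: (i) that the precise notion of $\epsilon$-approximate $t$-design proved in \cite{gross2020quantum} is one that delivers the contraction $\norm{\Delta\Phi^{(t)}(\mathcal{O}_2)}_1\le\epsilon\norm{\mathcal{O}_2}_1$ used above (diamond norm, or anything dominating the $1\!\to\!1$ norm, works); and (ii) controlling the $d^{t}$ factor, i.e.\ checking that the operators entering $\mathcal{P}_t$ for our probes are genuinely norm-bounded, so that no extra powers of $d$ creep in beyond the advertised $(\alpha+t)$ in the exponent — a cruder bound would only inflate the constant multiplying $Nt^4\log^2 t$. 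There is no hard representation-theoretic or combinatorial content here; the proposition is essentially the remark that detecting quantum chaos forces $\epsilon$ to be polynomially small in $d$, hence $k$ to grow linearly in $N$ (up to the $t$-dependent factors).
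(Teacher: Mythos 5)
Your proposal is correct and follows essentially the same route as the paper's own proof: write the probe as $\mathcal{P}_t(U)=\tr(T^{(t)}\mathcal{O}_1 U^{\otimes t}\mathcal{O}_2 U^{\dag\otimes t})$, bound $\delta^{(k)}_{\mathcal P}$ by H\"older plus the diamond-norm distance with the $\norm{\mathcal{O}_2}_1\le d^t\norm{\mathcal{O}_2}_\infty$ factor, and invoke the approximate-design result of \cite{gross2020quantum} with $\epsilon=O(d^{-(\alpha+t)})$ so that $\log\epsilon^{-1}=O((\alpha+t)N)$ gives $k=O((\alpha+t)Nt^4\log^2 t)$. The only quibble is attributional, not mathematical: the reference \cite{gross2020quantum} is the work of Haferkamp, Montealegre-Mora et al., not Gross--Nezami--Walter, though the result you use is exactly the one the paper relies on.
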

\begin{proof} The proof is straightforward from  the result in \cite{gross2020quantum}. In \cite{leone2020isospectral}, we proved that the generic probe $\mathcal{P}_t(U)$ to quantum chaos can be written as $\mathcal{P}_t(U)=\tr(T^{(t)}\mathcal{O}_1 U^{\otimes t}\mathcal{O}_{2}U^{\dag\otimes t})$ where $\mathcal{O}_1,\mathcal{O}_2, T^{(t)}\in\mathcal{B}(\mathcal{H}^{\otimes t})$, including the $2t$-point OTOC which characterize $t$-designs\cite{roberts2017chaos}. Then, the following inequality holds:
\be
\delta_{\mathcal{P}}^{(k)}\le \norm{T^{(t)}\mathcal{O}_1}_{\infty}\norm{\mathcal{O}_2}_{1}\norm{\Phi^{(t)}_{\mathcal{C},k}(\cdot)-\Phi^{(t)}_{\mathcal{U}(d)}(\cdot)}_{\diamond}\le d^t  \norm{T^{(t)}\mathcal{O}_1}_{\infty}\norm{\mathcal{O}_2}_{\infty}\norm{\Phi^{(t)}_{\mathcal{C},k}(\cdot)-\Phi^{(t)}_{\mathcal{U}(d)}(\cdot)}_{\diamond}
\ee
where we bounded $\norm{\mathcal{O}_2}_{1}\le    \norm{\mathcal{O}_2}_{\infty}\norm{\bbbone^{\otimes t}}_{1}=d^t\norm{\mathcal{O}_2}_{\infty}$. Thus if $k=O((\alpha+t) Nt^{4}\log^2(t))$ then $\delta^{(k)}_{\mathcal{P}}=O(d^{-\alpha})$.\end{proof}

As we will show in the following sections, the error $O(d^{-\alpha})$ is required to have OTOCs and fluctuations of the purity attain the unitary-Haar values. It follows that injecting  $O(N)$ non Clifford resources into a Clifford circuit is sufficient to obtain quantum chaos.

As we stated above, the necessary (together with the sufficient) condition will follow from direct calculations. It is important at this point to make some remarks about the value of $t$. One wonders if it is enough to consider $8$-OTOCs to reveal quantum chaos, or if sometimes it should be necessary to use higher order OTOCs. From the point of view of the above proposition, it is clear that $O(N)$ non Clifford resources are sufficient to obtain any OTOC  with an exponentially good $O(d^{-\alpha})$ approximation.  Once one has proved the necessary condition for $t=4$, it will also hold for any $t>4$ design, as an approximate $t$-design is necessarily a $t^{\prime}$ approximate design, for all $t^{\prime}<t$. In other words, polynomials of degree four are all that takes to reveal quantum chaos. {To see this, notice that the $4m-$OTOCs defined as 
\be
\otoc_{4m}(U):=d^{-1}\tr[(A_{1}B_{1}^{U}A_{2}B_{2}^{U}\cdots A_{m}B_{m}^{U})A_{1}^{\dag}(A_{1}B_{1}^{U}A_{2}B_{2}^{U}\cdots A_{m}B_{m}^{U})]
\label{4motocdef}
\ee
  reduces,  for $A_{3},\dots, A_m,B_{3},\dots B_{m}=\bbbone$, to the $8-$OTOC in Eq. \eqref{8otocdef}; therefore $k=\Omega(N)$ are necessary to obtain any $4m-$OTOC with an exponentially good approximation.}

\section{Main Theorem}\label{sec:main}
From the technical point of view, the main result of this paper is the exact calculation of the fourth moment of the output of a $k$-doped random Clifford circuit for a generic operator $\mathcal{O}\in\mathcal{B}(\mathcal{H}^{\otimes 4})$:
\begin{theorem}\label{th1}
Let $\mathcal{O}\in\mathcal{B}(\mathcal{H}^{\otimes 4})$ be a bounded operator,  $U\in \mathcal{C}_k$  a $k$-doped Clifford circuit; then the $(\mathcal{C}_k,4)$-fold channel for the $k$-doped Clifford circuit reads
\be
\Phi^{(4)}_{\mathcal{C}_k} (\mathcal O)=\sum_{\pi,\sigma\in\mathcal{S}_4}\left[\parent{(\Xi^{k})_{\pi\sigma}Q+\Gamma_{\pi\sigma}^{(k)}}c_{\pi}(\mathcal{O})+\delta_{\pi\sigma}b_{\pi}(\mathcal{O})\right]T_{\sigma}
\label{t1}
\ee
where
\be
Q=\frac{1}{d^2}\sum_{P\in\mathcal{P}(2^N)}P^{\otimes 4};\qquad\; Q^{\perp}=\bbbone^{\otimes 4}-Q
\ee 
and $\mathcal{P}(2^N)$ is the Pauli group on $N$ qubits;
$T_{\pi}$ are permutation operators corresponding to $\pi\in\mathcal{S}_{4}$, then $\Xi^{k}$ is the $k$-matrix power of the matrix $\Xi$, whose components read 
\be
\Xi_{\sigma\pi}\equiv\sum_{\tau\in\mathcal{S}_4}\left[W^{+}_{\pi\tau}\tr\left(T_{\sigma}K^{\otimes 4}QK^{\dag\otimes 4}QT_{\tau})-W^{-}_{\pi\tau}\tr(T_{\sigma}K^{\otimes 4}QK^{\dag\otimes 4}Q^{\perp}T_{\tau}\right)\right]\label{xi}
\ee
with
\ba
\Gamma^{(k)}_{\pi\sigma}&\equiv&\sum_{\tau\in\mathcal{S}_4}\Lambda_{\pi\tau}\sum_{i=0}^{k-1}(\Xi^i)_{\tau\sigma}\\
\Lambda_{\pi\tau}&\equiv&\sum_{\sigma\in\mathcal{S}_4}W^{-}_{\pi\sigma}\tr(T_{\tau}K^{\otimes 4}QK^{\dag\otimes 4}Q^{\perp}T_{\sigma})
\ea
and the information about the operator $\mathcal O$ is all contained in the coefficients
\ba\label{bc_coeff}
c_{\pi}(\mathcal{O})&\equiv&\sum_{\sigma\in\mathcal{S}_4}[W^{+}_{\pi\sigma}\tr(\mathcal{O}QT_{\sigma})-W^{-}_{\pi\sigma}\tr(\mathcal{O}Q^{\perp}T_{\sigma})]\label{ccoeff}\\
b_{\pi}(\mathcal{O})&\equiv&\sum_{\sigma\in\mathcal{S}_4}W^{-}_{\pi\sigma}\tr(\mathcal{O}Q^{\perp}T_{\sigma})\label{bcoeff}
\ea
where $W^{\pm}_{\pi\sigma}$ are the generalized Weingarten functions for the Clifford group, introduced and discussed in App. \ref{Cliffhaar}. 
\end{theorem}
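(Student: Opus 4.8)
The plan is to reduce the fourth-moment channel of the $k$-doped circuit to the iteration of a single elementary map. Writing $U^{(k)}=C_kK_{i_k}\cdots C_1K_{i_1}C_0$ and averaging over the mutually independent Cliffords $C_0,\dots,C_k$ one layer at a time, set $\mathcal{O}_0\equiv\Phi^{(4)}_{\mathcal{C}(d)}(\mathcal{O})$ and $\mathcal{O}_{j+1}\equiv\Phi^{(4)}_{\mathcal{C}(d)}\!\big(K_{i_{j+1}}^{\otimes 4}\,\mathcal{O}_j\,K_{i_{j+1}}^{\dag\otimes 4}\big)$, so that $\Phi^{(4)}_{\mathcal{C}_k}(\mathcal{O})=\mathcal{O}_k$. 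The one ingredient I would import from App.~\ref{Cliffhaar} is the fourth-moment twirl of \emph{one} Clifford group in the generalized-Weingarten form $\Phi^{(4)}_{\mathcal{C}(d)}(\mathcal{X})=\sum_{\pi\in\mathcal{S}_4}c_\pi(\mathcal{X})\,QT_\pi+\sum_{\pi\in\mathcal{S}_4}b_\pi(\mathcal{X})\,T_\pi$, with $c_\pi,b_\pi$ as in Eqs.~\eqref{ccoeff}--\eqref{bcoeff}; in particular the output of any Clifford twirl lies in the commutant of $\mathcal{C}(d)^{\otimes 4}$, which is spanned by $\{QT_\pi,\,Q^\perp T_\pi:\pi\in\mathcal{S}_4\}$. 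This is exactly the $k=0$ case of \eqref{t1}, where $\Xi^{0}=\bbbone$ and $\Gamma^{(0)}=0$.

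The crux is the observation that $K_i^{\otimes 4}$ — the same single-qubit unitary replicated on all four copies — commutes with every permutation operator $T_\pi$, so that $\Phi^{(4)}_{\mathcal{C}(d)}\!\big(K_i^{\otimes 4}T_\pi K_i^{\dag\otimes 4}\big)=\Phi^{(4)}_{\mathcal{C}(d)}(T_\pi)=T_\pi$, since $T_\pi$ is already Clifford-invariant. Hence the $T_\pi$-sector of $\mathcal{O}_j$ is frozen, and only the $QT_\pi$-sector can evolve — which reflects the fact that a non-Clifford $K$ fails to preserve the Pauli-group projector $Q$. For that sector one computes, by substituting $K_i^{\otimes 4}QK_i^{\dag\otimes 4}\,T_\pi$ into the single-twirl formula and using $QT_\pi=T_\pi Q$ together with the composition rule of the $T$'s, that $\Phi^{(4)}_{\mathcal{C}(d)}\!\big(K_i^{\otimes 4}QK_i^{\dag\otimes 4}T_\pi\big)$ has $QT_\sigma$-coefficients given by the entries $\Xi_{\sigma\pi}$ of \eqref{xi} and $T_\sigma$-coefficients given by the entries $\Lambda_{\sigma\pi}$: the traces $\tr(T_\sigma K^{\otimes 4}QK^{\dag\otimes 4}QT_\tau)$ and $\tr(T_\sigma K^{\otimes 4}QK^{\dag\otimes 4}Q^\perp T_\tau)$, weighted by $W^{\pm}$, are precisely what appears. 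Thus the layer map acts linearly on the coefficient data: the $Q$-sector vector $c$ is transported by the matrix $\Xi$, and each layer injects a source $\Lambda c$ into the $T$-sector vector $b$.

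Iterating $k$ times, the $Q$-sector vector becomes $\Xi^{k}c(\mathcal{O})$ while the $T$-sector vector becomes $b(\mathcal{O})+\Lambda(\bbbone+\Xi+\cdots+\Xi^{k-1})c(\mathcal{O})$, the second term being exactly $\Gamma^{(k)}_{\pi\sigma}=\sum_{\tau}\Lambda_{\pi\tau}\sum_{i=0}^{k-1}(\Xi^{i})_{\tau\sigma}$. Re-assembling the two sectors and the inert $b$-part gives $\mathcal{O}_k=\sum_{\pi,\sigma\in\mathcal{S}_4}\big[\big((\Xi^{k})_{\pi\sigma}Q+\Gamma^{(k)}_{\pi\sigma}\big)c_\pi(\mathcal{O})+\delta_{\pi\sigma}b_\pi(\mathcal{O})\big]T_\sigma$, which is \eqref{t1}. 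That the formula depends only on $k$ and not on the qubit positions $i_1,\dots,i_k$ is immediate from this derivation, since the Clifford twirl inserted before each $K_{i_j}$ erases the information about which qubit $K$ acted on — the content of Sec.~\ref{lemma}.

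I expect the real work to sit not in this recursion but in the fourth-moment Weingarten calculus for the Clifford group that it takes as input (App.~\ref{Cliffhaar}). Because $\mathcal{C}(d)$ is only a unitary $3$-design, the fourth moment is anomalous: one must (i) identify the commutant of $\mathcal{C}(d)^{\otimes 4}$ as $\operatorname{span}\{QT_\pi,\,Q^\perp T_\pi:\pi\in\mathcal{S}_4\}$, (ii) compute the Gram matrix of this overcomplete set, and (iii) invert it in a suitably generalized (Moore--Penrose) sense to obtain the functions $W^{\pm}$. Within the argument above, the only genuinely delicate step is keeping the conventions for permutation composition and for the placement of $Q$ versus $Q^\perp$ mutually consistent across the single-twirl formula, the definition of $\Xi$, and the definition of $\Lambda$, so that the coefficient identification in the layer map closes without an index or sign slip.
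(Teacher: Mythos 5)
Your proposal is correct and follows essentially the same route as the paper's proof: a layer-by-layer recursion in which the fourth-moment Clifford twirl (generalized Weingarten formula of App.~\ref{Cliffhaar}) is applied after each $K$-gate, the commutation $[K^{\otimes 4},T_\pi]=0$ keeps the pure-permutation sector inert, and the coefficient vectors evolve as $c\mapsto\Xi c$ with the source term $\Lambda c$ accumulating into $\Gamma^{(k)}=\Lambda\sum_{i=0}^{k-1}\Xi^{i}$. The only differences are presentational (you phrase the iteration as a linear map on coefficient data), so no gap to report.
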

The proof of the theorem can be found in App. \ref{proofmaintheorem}.

For many purposes, it is important to know to what  $\Phi^{(4)}_{\mathcal{C}_k}(\cdot)$ converges in the limit of infinite layers. Without substantial loss of generality, we consider the case of the non Clifford resources given by phase gates $P_\theta$ with $\theta\ne\pi/2$. We can thus establish the application:
\begin{application}\label{thconvergence} 
For $K=P_{\theta}\equiv\ket{0}\bra{0}+e^{i\theta}\ket{1}\bra{1}$, where $\{\ket{0},\ket{1}\}$ is the single qubit computational basis, and for any $\theta\neq\pm \pi/2$ the $(\mathcal{C}_k,4)$-fold channel equals the $(\mathcal{U}(d),4)$-fold channel  in the limit $k\rightarrow\infty$ is
\be
\lim_{k\rightarrow\infty}\Phi^{(4)}_{\mathcal{C}_k} (\mathcal O)=\Phi^{(4)}_{\mathcal U(d)} (\mathcal O)
\ee
\end{application}
The proof can be found in App. \ref{proofthconvergence}. Note that the above result can be also seen as a consequence of the results in \cite{harrow2009random}.

In the next sections, we apply these theorems to calculating the $8$-point OTOCs and fluctuations of subsystem purity to find how these quantities approach the Haar-average on $\mathcal U (d)$ with $k$.

\section{The $8$-point OTOC}\label{sec:8otoc}
Consider four non-identity and non-overlapping Pauli operators $A,B,C,D\in\mathcal{P}(d)$. Then consider the unitary evolution of $A_{U}=UAU^{\dag}$ in the Heisenberg picture and define an $8$-point Out of Time Order Correlator (OTOC) as\cite{roberts2017chaos}, defined in Eq.\eqref{8otocdef}
\be
\otoc_{8}(U):=d^{-1}\tr(AB_UCD_UAD_UCB_U)
\label{8otocdef2}
\ee
We are interested in taking the twirling of the $8$-point OTOC for a $k$-doped Clifford circuit, in order to find a necessary and sufficient condition for the exponential decay of the OTOC. Thanks to Theorem \ref{th1} we obtain
\begin{application}\label{c3} Let $K\equiv T$ the single qubit $T$-gate, then the average of the $8$-point OTOC over the $k$-doped Clifford circuit reads
\ba
\aver{\otoc_{8}(U)}_{U\in\mathcal{C}_k}&=&\frac{5d^2}{(d^2-1)(d^2-4)(d^2-9)}-(f^{-})^{k}\frac{d(d^2+4d+6)}{6(d^2-1)(d+2)(d+3)}\nonumber\\&+&(f^{+})^{k}\frac{d(d^2-4d+6)}{6(d^2-1)(d-2)(d-3)}+   \parent{\frac{f^{+}+f^{-}}{2}}^{k}\frac{4d^2}{3(d^2-1)(d^2-4)}
\label{otocfinalresult}
\ea
where $f^{\pm}\equiv f^{\pm}_{\pi/4}=\frac{3d^2\pm3d-4}{4(d^2-1)}=\frac{3}{4}+\Theta(d^{-1})$, where $f^{\pm}_{\theta}$ are defined in Eq.\eqref{fpmtheta}.
\end{application}

\begin{proof}
Starting from Eq.\eqref{8otocdef2} we can write the $8$-point OTOC for $U$ as
\be
\otoc_{8}(U)=d^{-1}\tr(T_{(1432)}(A\otimes C\otimes A\otimes C)U^{\otimes 4}(B\otimes D\otimes D\otimes B)U^{\dag\otimes 4})
\label{otocpermutation}
\ee
Taking the average over the $k$-doped Clifford $U\in \mathcal{C}_{k}$ we have
\be
\aver{\otoc_8(U)}_{U\in\mathcal{C}_k}=d^{-1}\tr\left(T_{(1432)}\left(A\otimes C\otimes A\otimes C\right)\aver{\left(B\otimes D\otimes D\otimes B\right)_{U}}_{U\in\mathcal{C}_k}\right)
\ee
from the latter equation, the calculation is a straightforward, but tedious application of Theorem \ref{th1}. \end{proof}

The following corollary explicitly shows the difference in the scaling of the $8$-point OTOC for a pure Clifford circuit and a universal circuit. As we shall see there is a marked difference in these scalings. As a direct consequence of Theorem \ref{th1} and Application \ref{thconvergence}, we obtain the following 
\begin{corollary}
Taking the average for $U\in\mathcal{U}(d)$ and $C\in\mathcal{C}(d)$ of the $8$-point OTOC, one gets
\ba
\aver{\otoc_8(U)}_{U\in\mathcal{U}(d)}&=&\frac{5d^2}{(d^2-1)(d^2-4)(d^2-9)}=\frac{5}{d^4}+\Theta(d^{-6})\label{otochaarexp}\\
\aver{\otoc_8(C)}_{C\in\mathcal{C}(d)}&=&\frac{d^2}{d^4-5d^2+4}=\frac{1}{d^2}+\Theta(d^4)\label{otocclifford1}
\ea
\end{corollary}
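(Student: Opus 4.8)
The plan is to obtain both averages as limiting cases of the $k$-doped machinery: the Haar average by letting $k\to\infty$ in Application~\ref{c3}, and the Clifford average by setting $k=0$ in Theorem~\ref{th1}. For the unitary group, note first that by Eq.~\eqref{otocpermutation} the $8$-point OTOC is a fixed trace functional of the fourth-moment channel, so averaging gives $\aver{\otoc_8(U)}_{U\in x}=d^{-1}\tr\!\big(T_{(1432)}(A\otimes C\otimes A\otimes C)\,\Phi^{(4)}_{x}(B\otimes D\otimes D\otimes B)\big)$ for any ensemble $x$; hence Application~\ref{thconvergence}, $\lim_{k\to\infty}\Phi^{(4)}_{\mathcal C_k}=\Phi^{(4)}_{\mathcal U(d)}$ (pointwise on $\mathcal B(\mathcal H^{\otimes 4})$), yields $\aver{\otoc_8(U)}_{U\in\mathcal U(d)}=\lim_{k\to\infty}\aver{\otoc_8(U)}_{U\in\mathcal C_k}$. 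I would then take $k\to\infty$ directly in Eq.~\eqref{otocfinalresult}: the only point to verify is that $f^{+}_{\pi/4}=\tfrac{3d^2+3d-4}{4(d^2-1)}$, $f^{-}_{\pi/4}=\tfrac{3d^2-3d-4}{4(d^2-1)}$ and their arithmetic mean all lie strictly in $(0,1)$ for every $d\ge 4$ --- in particular for any $d$ admitting four non-overlapping non-identity Paulis --- which is elementary, e.g. $f^{+}_{\pi/4}<1\Leftrightarrow d(d-3)>0$. The three $k$-dependent terms of Eq.~\eqref{otocfinalresult} then vanish in the limit and only the first survives, giving $\aver{\otoc_8(U)}_{U\in\mathcal U(d)}=\tfrac{5d^2}{(d^2-1)(d^2-4)(d^2-9)}$; expanding $\tfrac1{d^2-m^2}=d^{-2}\sum_{j\ge0}(m^2/d^2)^{j}$ for $m=1,2,3$ gives the stated $\tfrac{5}{d^4}+\Theta(d^{-6})$.

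For the Clifford group, $\aver{\otoc_8(C)}_{C\in\mathcal C(d)}$ is the $k=0$ instance, and setting $k=0$ in Theorem~\ref{th1} is clean: $\Xi^{0}=\bbbone^{\otimes 4}$ and $\Gamma^{(0)}=0$ (the geometric sum is empty), so $\Phi^{(4)}_{\mathcal C(d)}(\mathcal O)=\sum_{\pi\in\mathcal S_4}\big(Q\,c_\pi(\mathcal O)+b_\pi(\mathcal O)\big)T_\pi$, the pure Clifford $4$-fold twirl. I would substitute $\mathcal O=B\otimes D\otimes D\otimes B$ into $c_\pi,b_\pi$ from Eqs.~\eqref{ccoeff}--\eqref{bcoeff}, contract with $d^{-1}T_{(1432)}(A\otimes C\otimes A\otimes C)$, and reduce everything (using $Q^{\perp}=\bbbone^{\otimes 4}-Q$) to two kinds of traces: plain permutation traces $\tr(\,\cdot\,T_\sigma)$, which are products of Pauli traces along the cycles of $\sigma$ and vanish unless the inserted Paulis pair to the identity --- so only a handful of $\sigma$ survive, $A,B,C,D$ being non-identity --- and traces $\tr(\,\cdot\,QT_\sigma)$, in which the Pauli sum $Q=d^{-2}\sum_P P^{\otimes 4}$ collapses by Pauli orthogonality because $A,B,C,D$ have disjoint supports. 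Folding in the generalized Clifford Weingarten functions $W^{\pm}$ of App.~\ref{Cliffhaar} and summing over $\mathcal S_4\times\mathcal S_4$ then produces the rational function $\tfrac{d^2}{d^4-5d^2+4}$, whose expansion is $\tfrac1{d^2}+\Theta(d^{-4})$ --- consistent with the $\Theta(d^{-2})$ Clifford scaling of~\cite{roberts2017chaos}.

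The genuinely laborious step --- and the one I expect to be the main obstacle --- is the Clifford computation: keeping track of which of the $\mathcal S_4\times\mathcal S_4$ Weingarten entries multiply nonvanishing $Q$- and permutation-contractions once the supports of $A,B,C,D$ are declared disjoint, and then checking that the resulting sum of rational functions of $d$ collapses to the stated closed form. The unitary half costs essentially nothing beyond the strict inequalities $0<f^{\pm}_{\pi/4}<1$ and a geometric expansion. Placing the two results side by side, $\tfrac{5}{d^4}+\Theta(d^{-6})$ for $\aver{\otoc_8(U)}_{U\in\mathcal U(d)}$ against $\tfrac1{d^2}+\Theta(d^{-4})$ for $\aver{\otoc_8(C)}_{C\in\mathcal C(d)}$, exhibits the two-order-of-magnitude gap in scaling and makes explicit that an error $\epsilon=O(d^{-4})$ is required to resolve it --- the fact invoked repeatedly in the sections that follow.
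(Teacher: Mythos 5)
Your treatment of the Haar line is exactly the paper's: both follow by letting $k\to\infty$ in Eq.~\eqref{otocfinalresult}, justified by Application~\ref{thconvergence} together with $0<f^{\pm}_{\pi/4}<1$, and your geometric expansion of $\tfrac{5d^2}{(d^2-1)(d^2-4)(d^2-9)}$ is the intended reading of $\tfrac{5}{d^4}+\Theta(d^{-6})$. For the Clifford line, however, you take a genuinely different and much heavier route than the paper, and you leave it unexecuted: the paper simply sets $k=0$ in the already-proven doped formula Eq.~\eqref{otocfinalresult} (where $(f^{\pm})^{0}=1$), so no new Weingarten or Pauli bookkeeping is needed --- all of that labor was spent once, in proving Application~\ref{c3}. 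You instead go back to Theorem~\ref{th1} at $k=0$ and propose to redo the full $\mathcal S_4\times\mathcal S_4$ contraction of $c_\pi$, $b_\pi$ with $T_{(1432)}(A\otimes C\otimes A\otimes C)$; this is a legitimate independent derivation of the pure Clifford value (and your blueprint --- disjoint supports, Pauli orthogonality collapsing the $Q$-traces, cycle-structure of the permutation traces --- is the right one), but as written it is a plan rather than a computation, and you yourself flag it as the main obstacle. Since you already invoke Eq.~\eqref{otocfinalresult} for the $k\to\infty$ limit, the shorter argument is available to you for free: evaluate the same formula at $k=0$, which is all the corollary requires. Two cosmetic points: $\Xi^{0}$ is the identity matrix on the $24$-dimensional space indexed by $\mathcal S_4$, i.e.\ $(\Xi^{0})_{\pi\sigma}=\delta_{\pi\sigma}$, not the operator $\bbbone^{\otimes 4}$ on $\mathcal H^{\otimes 4}$ (your subsequent formula absorbs the $\delta$ correctly, so this is only notation); and your reading of the Clifford asymptotics as $\tfrac1{d^2}+\Theta(d^{-4})$ is the correct interpretation of the statement's evident typo $\Theta(d^{4})$.
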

\begin{proof} The proof of Eq.\eqref{otochaarexp} can be obtained from Eq.\eqref{otocfinalresult} in the limit $k\rightarrow\infty$, in virtue of Theorem \ref{thconvergence},  while Eq.\eqref{otocclifford1} can be obtained from Eq.\eqref{otocfinalresult} setting $k=0$.\end{proof}

With the following statement, we give the necessary and sufficient condition for the number of non Clifford gates needed to precisely simulate the behavior of the $8$-point OTOC, and thus to simulate quantum chaos.

\begin{corollary}\label{c5}
Iff $k=\Theta(\log d)$, then
\be
\delta_{\otoc}^{(k)}\equiv\left|\aver{\otoc_{8}(U)}_{U\in\mathcal{C}_k}-\aver{\otoc_{8}({U})}_{U\in\mathcal{U}(d)}\right|=\Theta(d^{-4})
\ee
\end{corollary}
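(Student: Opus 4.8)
The plan is to read everything off the closed form of Application~\ref{c3}. Its $k$-independent term $5d^2/[(d^2-1)(d^2-4)(d^2-9)]$ is precisely the Haar value recorded in the preceding Corollary, so subtracting it kills that term and leaves
\[
\delta_{\otoc}^{(k)}=\left|\,(f^+)^k p^+-(f^-)^k p^-+\Bigl(\tfrac{f^++f^-}{2}\Bigr)^{k}p_3\,\right|,
\]
with $p^-=\tfrac{d(d^2+4d+6)}{6(d^2-1)(d+2)(d+3)}$, $p^+=\tfrac{d(d^2-4d+6)}{6(d^2-1)(d-2)(d-3)}$ and $p_3=\tfrac{4d^2}{3(d^2-1)(d^2-4)}$, all positive. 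First I would record the elementary facts that drive everything: $f^-\le\tfrac{f^++f^-}{2}\le\tfrac34\le f^+$, each equal to $\tfrac34+\Theta(d^{-1})$; $p^\pm=\Theta(d^{-1})$ while $p_3=\Theta(d^{-2})$; and $f^+>f^-$, $p^+>p^-$. Consequently all three summands are in fact positive, the absolute value is cosmetic, and $\delta_{\otoc}^{(k)}=(f^+)^k p^+-(f^-)^k p^-+(\tfrac{f^++f^-}{2})^k p_3>0$.

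The two order-of-magnitude halves of the claim are then cheap. For necessity, drop the first two (positive) terms: $\delta_{\otoc}^{(k)}\ge(\tfrac{f^++f^-}{2})^k p_3$, and since $\tfrac{f^++f^-}{2}=\tfrac34-\Theta(d^{-2})$ and $p_3\ge\tfrac{4}{3d^2}$ this is $\ge\tfrac{2}{3d^2}(3/4)^k$ whenever $k\le\tfrac32(d^2-1)$ (a fortiori in the whole regime of interest). Hence $\delta_{\otoc}^{(k)}=O(d^{-4})$ forces $(3/4)^k=O(d^{-2})$, i.e.\ $k\ge\tfrac{2\log_2 d}{\log_2(4/3)}-O(1)=\Omega(\log d)$. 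For sufficiency I would either quote Proposition~1 with $t=\alpha=4$ (which already gives $\delta_{\otoc}^{(k)}=O(d^{-4})$ once $k=O(N)$), or, staying self-contained, use the crude bound $\delta_{\otoc}^{(k)}\le(f^+)^k(p^++p^-+p_3)=O\bigl((f^+)^k d^{-1}\bigr)$ together with $(f^+)^k=(3/4)^k e^{O(k/d)}=O\bigl((3/4)^k\bigr)$ for $k=O(\log d)$, so that $k\ge\tfrac{3\log_2 d}{\log_2(4/3)}$ already yields $\delta_{\otoc}^{(k)}=O(d^{-4})$. Either way $k=\Theta(\log d)$ is both necessary and sufficient to reach the Haar scaling of the OTOC.

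To sharpen this to the stated $\delta_{\otoc}^{(k)}=\Theta(d^{-4})$ I would pin down the exact order of $\delta_{\otoc}^{(k)}$. The two $\Theta(d^{-1})$-sized terms almost cancel, so I would write
\[
(f^+)^k p^+-(f^-)^k p^-=(f^+)^k(p^+-p^-)+\bigl((f^+)^k-(f^-)^k\bigr)p^-,
\]
observe $p^+-p^-=\tfrac{d^2(d^2+6)}{3(d^2-1)(d^2-4)(d^2-9)}=\Theta(d^{-2})$, and estimate the difference of powers via the telescoping identity $(f^+)^k-(f^-)^k=(f^+-f^-)\sum_{j=0}^{k-1}(f^+)^j(f^-)^{k-1-j}$: since $f^+-f^-=\Theta(d^{-1})$ and each of the $k$ summands is $\Theta\bigl((3/4)^k\bigr)$ uniformly --- this is the step that uses $f^\pm=\tfrac34\bigl(1\pm\tfrac1d+O(d^{-2})\bigr)$ together with $k=o(d)$, so that $(1\pm\tfrac1d)^j\in[\tfrac12,2]$ for every $j\le k$ --- one gets $(f^+)^k-(f^-)^k=\Theta\bigl(k(3/4)^k d^{-1}\bigr)$. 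Combined with the $p_3$-term this gives $\delta_{\otoc}^{(k)}=\Theta\bigl((k+1)(3/4)^k d^{-2}\bigr)$ for $1\le k=o(d)$ (while $\delta_{\otoc}^{(k)}=d^{-\omega(1)}$ for $k=\Omega(d)$ by the crude bound of the previous paragraph, and $k=0$ gives $\Theta(d^{-2})$). Setting $(k+1)(3/4)^k d^{-2}=\Theta(d^{-4})$ and taking logarithms, with $\log_2(k+1)=o(k)$, forces $k=\tfrac{2\log_2 d}{\log_2(4/3)}(1+o(1))=\Theta(\log d)$, and conversely such a $k$ --- shifted by an additive $O(\log\log d)$ to absorb the factor $k+1$ --- realizes $\delta_{\otoc}^{(k)}=\Theta(d^{-4})$. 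I expect the one genuine obstacle to be exactly this cancellation estimate: bounding $(f^+)^k-(f^-)^k$, a difference of two quantities each of size $\Theta((3/4)^k)$ whose bases differ only by $\Theta(d^{-1})$, both tightly and \emph{uniformly} over $1\le k=o(d)$, so that the $\Theta(d^{-1})$ pieces genuinely collapse to $\Theta(d^{-2})$ and the surviving factor of $k$ is captured on both sides. The remaining bookkeeping --- the order bounds on $f^\pm$ and the $p$'s, the $k=\Omega(d)$ regime, and the logarithmic inversion for $k$ --- is routine.
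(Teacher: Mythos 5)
Your proposal is correct and follows essentially the same route as the paper: subtract the Haar value from the closed form of Application~2, extract the leading behavior $\Theta\!\left(k\,(3/4)^{k}d^{-2}\right)$ from the near-cancellation of the two $\Theta(d^{-1})$ terms, and invert to get $k=\Theta(\log d)$. The paper does this via a two-line $1/d$ expansion (quoting a leading coefficient $\left|\tfrac{k-3}{3}\right|$ where your uniform telescoping estimate gives $\Theta(k+1)$ --- a discrepancy immaterial to the $\Theta$ statement), so your positivity and uniformity-in-$k$ bookkeeping is simply a more careful rendition of the same argument.
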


\begin{proof}
Taking the difference in absolute value between Eq.\eqref{otocfinalresult} and \eqref{otochaarexp} we get
\ba
\delta_{\otoc}^{(k)}&=&-(f^{-})^{k}\frac{d(d^2+4d+6)}{6(d^2-1)(d+2)(d+3)}+(f^{+})^{k}\frac{d(d^2-4d+6)}{6(d^2-1)(d-2)(d-3)}\nonumber\\&+&   \parent{\frac{f^{+}+f^{-}}{2}}^{k}\frac{4d^2}{3(d^2-1)(d^2-4)}
\ea
Taking the asymptotic limit for $d\rightarrow\infty$ up to $\Theta(d^{-4})$
\be
\delta_{\otoc}^{(k)}=\frac{1}{d^2}\left|\frac{k-3}{3}\right|\parent{\frac{3}{4}}^{k}+\Theta(d^{-4})
\ee
from here it's easy to see that one has the following condition
\be
\left(\frac{3}{4}\right)^{k}k=\Theta(d^{-2})\iff k=\Theta(\log d)
\ee
which leads to the desired result. \end{proof}

\section{Purity and its Fluctuations} \label{flucpurity}
In this section, we compute the fluctuations of a subsystem purity Eq.\eqref{flucpurdef1} for the output of the $k$-doped Clifford circuit $U\in\mathcal{C}_k$. To this end, we first
apply Theorem \ref{th1} to calculate the average of the fourth tensor power of a pure state $\psi$, namely 
$\Phi^{(4)}_{\mathcal{C}_k}(\psi^{\otimes 4}) $.


\begin{application}\label{application1}
The $(\mathcal{C}_k,4)$-fold channel of a pure state $\psi^{\otimes 4}\in\mathcal{B}(\mathcal{H}^{\otimes 4})$ reads
\ba\label{maintheorem}
\Phi^{(4)}_{\mathcal{C}_k}(\psi^{\otimes 4}) =a_{k}Q\Pi_{\sym}^{(4)}+b_{k}\Pi_{\sym}^{(4)}
\ea
where $\Pi_{\sym}^{(4)}$ is the  projector onto the completely symmetric subspace of the permutation group $\mathcal S_4$ and $D_{\sym}=\tr(\Pi_\sym^{(4)})$. The coefficients $a_k,b_k$ are given by
\ba
a_{k}&\equiv&\left(\frac{\tr(\psi^{\otimes 4}Q)}{D^+}-\frac{\tr(\psi^{\otimes 4}Q^{\perp})}{D^{-}}\right)\left(\frac{c_Q}{D^+}-\frac{c_{QQ^\perp}}{D^-}\right)^{k}\nonumber\\
b_{k}&\equiv&\frac{\tr(\psi^{\otimes 4}Q^{\perp})}{D^{-}}+\frac{c_{QQ^\perp}}{D^-}\sum_{i=0}^{k-1}a_i
\label{coefficientmaintheorem}
\ea
with
\be
c_{Q}\equiv\tr(K^{\otimes 4}QK^{\dag\otimes 4}Q\Pi_{\sym}^{(4)}), \quad c_{QQ^\perp}\equiv\tr(K^{\otimes 4}QK^{\dag\otimes 4}Q^{\perp}\Pi_{\sym}^{(4)})
\ee
\end{application}
The proof can be found in App. \ref{proofcorollary1}. 
The evaluation of Eq.\eqref{maintheorem} becomes particularly simple if the gate $K$ is a $P_\theta$-gate:
\begin{application}\label{app4} 
If the single qubit gate $K$ is the $P_\theta$-gate, the coefficients $c_{Q}, c_{QQ^\perp}$ read: 
\ba
c_{Q}&=&\frac{(d+2)(4+7d+(4+d)\cos(4\theta))}{48}\\
c_{QQ^\perp}&=&\frac{(d+2)(d+4)}{24}\sin^{2}(2\theta)
\ea
Then for any $k$ we can write the coefficients $a_k,b_k$ as
\ba
a_{k}&=&\frac{24 }{(d^2-1)(d+2)(d+4)}\left(\frac{d(d+3)}{4}\tr(\psi^{\otimes 4} Q)-1\right)(f_{\theta}^{-})^{k}\nonumber\\
b_{k}&=&\frac{1}{D_{\sym}}+\frac{24}{(d^2-1)(d+2)(d+4)}\left(\frac{4}{d(d+3)}-\tr(\psi^{\otimes 4}Q)\right)(f_{\theta}^{-})^{k}\label{kcoefficienttheta}
\ea
where $(f_{\theta}^{-})$ is defined in Eq.\eqref{fpmtheta}; note that $(f_{\theta}^{-})<1$ unless $\theta=\pm \pi/2$, i.e unless $P_\theta=S$ the $S$-gate $\in \mathcal{C}(d)$.
\end{application}
See App. \ref{calculationcqcqqp} for the proof.
\begin{corollary}\label{corollaryptheta}
For any $\theta\neq \pm \pi/2$
\be\label{Tgatetheorem}
\lim_{k\rightarrow\infty}\Phi^{(4)}_{\mathcal{C},k}(\psi^{\otimes 4})=\Phi^{(4)}_{\mathcal{U}(d)}(\psi^{\otimes 4})
\ee
\end{corollary}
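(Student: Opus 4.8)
The plan is to derive Corollary \ref{corollaryptheta} directly from Application \ref{app4} by computing the $k\to\infty$ limit of the coefficients $a_k$ and $b_k$, and then checking that the resulting operator coincides with the known Haar fourth-moment operator $\Phi^{(4)}_{\mathcal U(d)}(\psi^{\otimes 4})$. First I would note that, as stated right after Eq.\eqref{kcoefficienttheta}, $(f_\theta^-)<1$ whenever $\theta\neq\pm\pi/2$, so $(f_\theta^-)^k\to 0$ as $k\to\infty$. Feeding this into Eq.\eqref{kcoefficienttheta} gives $a_k\to 0$ and $b_k\to 1/D_{\sym}$. Substituting into Eq.\eqref{maintheorem} then yields
\be
\lim_{k\rightarrow\infty}\Phi^{(4)}_{\mathcal{C}_k}(\psi^{\otimes 4})=\frac{1}{D_{\sym}}\Pi_{\sym}^{(4)}.
\ee

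Next I would identify the right-hand side with $\Phi^{(4)}_{\mathcal U(d)}(\psi^{\otimes 4})$. This is the standard fact that the Haar average of $U^{\otimes 4}\psi^{\otimes 4}U^{\dag\otimes 4}$ over $\mathcal U(d)$ equals the normalized projector onto the symmetric subspace of $\mathcal H^{\otimes 4}$ — a direct consequence of Schur--Weyl duality (the fourth moment of a Haar-random pure state is the maximally mixed state on $\operatorname{Sym}^4(\mathcal H)$), which also follows from Application \ref{thconvergence} specialized to $\mathcal O=\psi^{\otimes 4}$. I would either cite this as well known or note it follows by applying the general convergence statement of Application \ref{thconvergence} to the particular operator $\psi^{\otimes 4}$. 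Combining the two observations gives the claimed equality and completes the proof.

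The only genuine subtlety — and the step I would be most careful about — is the strict inequality $(f_\theta^-)<1$ for all $\theta\neq\pm\pi/2$, i.e. that $P_\theta\notin\mathcal C(d)$ is exactly the condition guaranteeing contraction. This requires the explicit form $f^\pm_\theta$ from Eq.\eqref{fpmtheta} and a short check that its absolute value is strictly below $1$ off the Clifford points; since this is asserted in the paragraph following Eq.\eqref{kcoefficienttheta} and proved in App.\ref{calculationcqcqqp}, I would simply invoke it. Everything else is a plug-in limit, so the corollary follows in a couple of lines. A clean way to present it:
\begin{proof}
By Application \ref{app4}, for $\theta\neq\pm\pi/2$ one has $(f_\theta^-)<1$, hence $(f_\theta^-)^k\to 0$ as $k\to\infty$. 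Taking this limit in Eq.\eqref{kcoefficienttheta} gives $a_k\to 0$ and $b_k\to 1/D_{\sym}$, so by Eq.\eqref{maintheorem}
\be
\lim_{k\rightarrow\infty}\Phi^{(4)}_{\mathcal{C}_k}(\psi^{\otimes 4})=\frac{1}{D_{\sym}}\Pi_{\sym}^{(4)}=\Phi^{(4)}_{\mathcal U(d)}(\psi^{\otimes 4}),
\ee
where the last equality is the Haar fourth moment of a pure state, equivalently Application \ref{thconvergence} applied to $\mathcal O=\psi^{\otimes 4}$.
\end{proof}
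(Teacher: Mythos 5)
Your proposal is correct and follows essentially the same route as the paper's own proof: the paper's "alternative version" likewise sets $f_{\theta}^{-}<1$ in Eq.\eqref{kcoefficienttheta} to get $a_k\to 0$, $b_k\to D_{\sym}^{-1}$, and then identifies $\Pi_{\sym}^{(4)}/D_{\sym}$ with $\Phi^{(4)}_{\mathcal U(d)}(\psi^{\otimes 4})$ via Eq.\eqref{haaraverageofstate} (the paper also notes, as you do, that the statement follows directly from Application \ref{thconvergence}). No gaps to report.
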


\begin{proof} The proof follows directly from Application \ref{thconvergence}; here we give an alternative version: setting $f_{\theta}<1$ in Eq.\eqref{kcoefficienttheta} and taking the limit $k\rightarrow\infty$ one gets
\ba
\lim_{k\rightarrow\infty}a_{k}&=&0\\
\lim_{k\rightarrow\infty}b_{k}&=&D_{\sym}^{-1}
\ea
Now, since the fourth tensor power of $\psi_U$ averages to - see Eq.\eqref{haaraverageofstate} in App. \ref{puraveragesec} for a proof -
\be
\Phi^{(4)}_{\mathcal{U}(d)}(\psi^{\otimes 4})=\frac{\Pi_{\sym}^{(4)}}{D_{\sym}}
\ee
then by Application \ref{application1} the proof is complete.
 \end{proof}

In what follows, we  calculate the purity and its fluctuations in a bipartite Hilbert space for the output state of a $k$-doped Clifford circuit, calculated above in Eq.\eqref{maintheorem}. Consider then a bipartition of the $N$-qubit system $\mathcal{H}= \mathcal{H}_A\otimes \mathcal{H}_B$ with $\mathcal H_{A(B)}= \mathbb{C}^{2\otimes N_{A(B)}}$, $N_{A}+N_{B}=N$ and $d_{A(B)}=2^{N_{A(B)}}$. The purity of a marginal state $\psi_A =\tr_B\psi \in \mathcal B(\mathcal H_A)$ is given by
\be
\pur(\psi_A):=\tr(\psi_{A}^2)
\ee
The averages over Unitary and Clifford group for the purity of the output $\psi$ of a random quantum circuits are the same, namely
\be
\aver{\pur(\psi_U)_A}_{U\in\haar}=\aver{\pur(\psi_C)_A}_{C\in \mathcal C(d)}=\frac{d_A+d_B}{d_Ad_B+1}
\label{purityaverage}
\ee
This is a consequence of $\mathcal C(d)$ being a $3$-design\cite{webb2016clifford,Zhu2017multiqubit} (in fact, being a $2$-design is sufficient), 
 see App. \ref{puraveragesec} for a proof. Notice that the average purity does not depend on the input state.
 
 The fluctuations of the purity for the set $x$ are defined as
 \ba\label{flucpurdef2}
 \Delta_{x}\pur(\psi_U)_{A}:= \langle (\pur(\psi_U)_A-\langle \pur(\psi_U)_A\rangle_{U\in x})^2\rangle_{U\in x}
 \ea
 Since the fluctuations involve the fourth moment of the Haar measure, the fluctuations for $\mathcal U(d), \mathcal C(d)$ are expected to be different. We have indeed, for 
 $d_A=d_B=\sqrt{d}$
\ba
\Delta_{\haar}\pur(\psi_U)_{A}&=& \frac{2(d-1)^2}{(d+1)^2(d+2)(d+3)}=\Theta(d^{-2})\\
\Delta_{\mathcal{C}(d)}\pur(\psi_C)_{A}&=&\frac{(d-1)[d(d+1)\tr(Q\psi^{\otimes 4})-2]}{(d+1)^2(d+2)}=\begin{cases}\Theta(d^{-1}), \quad  \psi=\ket{0}\bra{0}^{\otimes N}\\
\Theta(d^{1-\log_2 5}), \quad  \psi=\otimes_i\psi_i, \; \psi_i \; \text{random}
\end{cases}
\label{cliffordfluct}
\ea
This result is a consequence of 
Application \ref{application1} and Corollary \ref{corollaryptheta}. Notice that while the fluctuations of the purity for the unitary group again do not depend on the initial state, those for the Clifford group do. For $\psi $ being any other stabilizer state different from $\ket{0}\bra{0}^{\otimes N}$, the formula would not change thanks to the left/right invariance of the Haar measure over groups. Notably, starting from completely factorized states, there is a marked difference whether the initial state $\psi$ is a stabilizer state or a random product state.

\begin{lemma}\label{lemmafluc}
The fluctuations of the purity in the $k$-doped Clifford circuit, for $d_A=d_B=\sqrt{d}$ and $\psi=\ket{0}\bra{0}^{\otimes N}$,   are
\be
\Delta_{\mathcal{C}_k}\pur(\psi_U)_{A}=\frac{(d-1)^2}{(d+1)^2(d+2)(d+3)}(2+(d+1)(f_{\theta}^{-})^{k})
\label{fluctclifforddoped}
\ee
where $f_\theta^{-}$ is defined in Eq.\eqref{fpmtheta}.
\end{lemma}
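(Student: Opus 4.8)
The plan is to specialize Application~\ref{application1} (equivalently Application~\ref{app4}) to the initial stabilizer state $\psi=\ket{0}\bra{0}^{\otimes N}$ and the bipartition $d_A=d_B=\sqrt d$, and then feed the resulting fourth moment $\Phi^{(4)}_{\mathcal{C}_k}(\psi^{\otimes 4})$ into the definition \eqref{flucpurdef2} of the purity fluctuations. The key observation is that $\pur(\psi_U)_A = \tr(\psi_{U,A}^2)$ is a quadratic function of $\psi_U$, so both $\langle\pur(\psi_U)_A\rangle$ and $\langle\pur(\psi_U)_A^2\rangle$ are linear functionals of $\Phi^{(2)}_{\mathcal{C}_k}(\psi^{\otimes 2})$ and $\Phi^{(4)}_{\mathcal{C}_k}(\psi^{\otimes 4})$ respectively, obtained by tracing against the appropriate permutation operators: one writes $\pur(\psi_U)_A = \tr((\mathbb{T}_A\otimes\mathbb{1}_B)\psi_U^{\otimes 2})$ with $\mathbb{T}_A$ the swap on the two copies of $A$, and $\pur(\psi_U)_A^2 = \tr(\mathbb{P}\,\psi_U^{\otimes 4})$ where $\mathbb{P}$ is the product of two disjoint swaps (say $(12)(34)$) acting only on the $A$-factors. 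Then $\langle\pur(\psi_U)_A^2\rangle_{U\in\mathcal C_k} = \tr(\mathbb{P}\,\Phi^{(4)}_{\mathcal{C}_k}(\psi^{\otimes 4}))$.

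Next I would substitute Eq.~\eqref{maintheorem}, $\Phi^{(4)}_{\mathcal{C}_k}(\psi^{\otimes 4}) = a_k Q\Pi_{\sym}^{(4)} + b_k\Pi_{\sym}^{(4)}$, so the computation reduces to evaluating the two traces $\tr(\mathbb{P}\,\Pi_{\sym}^{(4)})$ and $\tr(\mathbb{P}\,Q\Pi_{\sym}^{(4)})$. The first is a standard Weingarten-type quantity: $\tr(\mathbb{P}\,\Pi_{\sym}^{(4)}) = \tr(\mathbb{P}\,\Pi_{\sym}^{(4),A})\tr(\Pi_{\sym}^{(4),B})$ up to the usual care about how the symmetric projector factorizes across the bipartition; concretely one computes it in terms of $d_A, d_B$ using $\Pi_\sym^{(4)} = \frac{1}{4!}\sum_{\pi\in\mathcal S_4}T_\pi$ and the multiplicativity $\tr(T_\pi) = d^{\#\text{cycles}(\pi)}$ over the two subsystems, with $\mathbb P$ reshuffling only the $A$-cycles. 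The second trace, $\tr(\mathbb{P}\,Q\Pi_{\sym}^{(4)})$, is the genuinely new ingredient; here I would use that $Q = d^{-2}\sum_{P\in\mathcal P(2^N)}P^{\otimes 4}$ is (a multiple of) the projector onto the Clifford-invariant subspace and exploit the structure already developed in the appendices — in particular that $Q\Pi_\sym^{(4)}$ lives in a small, explicitly describable subspace — together with the fact that for the stabilizer input $\psi=\ket{0}\bra{0}^{\otimes N}$ one has the explicit value of $\tr(Q\psi^{\otimes 4})$ that already appears in Eq.~\eqref{cliffordfluct}. I would also need $\langle\pur(\psi_U)_A\rangle_{U\in\mathcal C_k}$, but this is handled by Eq.~\eqref{purityaverage}: since $\mathcal C_k$ contains the full Clifford group as $C_0$, the second moment is already Haar-like and the mean purity is exactly $(d_A+d_B)/(d_Ad_B+1) = 2\sqrt d/(d+1)$, independent of $k$.

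Assembling the pieces, $\Delta_{\mathcal C_k}\pur(\psi_U)_A = \tr(\mathbb P\,\Phi^{(4)}_{\mathcal C_k}(\psi^{\otimes 4})) - \left(\tfrac{2\sqrt d}{d+1}\right)^2 = a_k\,\tr(\mathbb P\,Q\Pi_\sym^{(4)}) + b_k\,\tr(\mathbb P\,\Pi_\sym^{(4)}) - \tfrac{4d}{(d+1)^2}$, with $a_k, b_k$ given explicitly by Application~\ref{app4} (evaluated at $\tr(\psi^{\otimes 4}Q)$ for the stabilizer state). The entire $k$-dependence enters only through the single scalar $(f_\theta^-)^k$, since $b_k = D_\sym^{-1} + (\text{const})\,(f_\theta^-)^k$ and $a_k = (\text{const})\,(f_\theta^-)^k$; matching the $k$-independent part against $\Delta_{\mathcal U(d)}\pur = 2(d-1)^2/[(d+1)^2(d+2)(d+3)]$ (which it must equal, by Corollary~\ref{corollaryptheta}, as $k\to\infty$) fixes one combination, and the coefficient of $(f_\theta^-)^k$ is then read off from the remaining algebra, yielding the claimed $\frac{(d-1)^2}{(d+1)^2(d+2)(d+3)}(2+(d+1)(f_\theta^-)^k)$. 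The main obstacle I expect is the bookkeeping in $\tr(\mathbb P\,Q\Pi_\sym^{(4)})$: one must correctly track how the swap operators $\mathbb P$ supported on the $A$-subsystem interact with the Clifford-invariant operator $Q$ (which is not a simple permutation and does not factorize across $A$ and $B$), and the sanity check for this step is precisely that the $k\to\infty$ limit reproduces the known Haar value and that $k=0$ reproduces the Clifford value $\Theta(d^{-1})$ in Eq.~\eqref{cliffordfluct} — i.e., $\Delta_{\mathcal C_0}\pur = \frac{(d-1)^2}{(d+1)^2(d+2)(d+3)}(d+3) = \frac{(d-1)^2}{(d+1)^2(d+2)}$, consistent with setting $\tr(Q\psi^{\otimes 4})$ to its stabilizer value in Eq.~\eqref{cliffordfluct}.
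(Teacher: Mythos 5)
Your skeleton is the same as the paper's (App.~\ref{prooflemmafluc}): write $\aver{\pur^2(\psi_{U})_A}_{U\in\mathcal{C}_k}=\tr\bigl(T^{(A)}_{(12)(34)}\,\Phi^{(4)}_{\mathcal{C}_k}(\psi^{\otimes 4})\bigr)$, substitute Eq.~\eqref{maintheorem} so that only $\tr\bigl(T^{(A)}_{(12)(34)}Q\Pi_{\sym}^{(4)}\bigr)$ and $\tr\bigl(T^{(A)}_{(12)(34)}\Pi_{\sym}^{(4)}\bigr)$ are needed, and subtract the squared mean purity from Eq.~\eqref{purityaverage}. The genuine gap sits exactly where you flag the ``main obstacle'': you leave $\tr\bigl(T^{(A)}_{(12)(34)}Q\Pi_{\sym}^{(4)}\bigr)$ unevaluated, and the reason you give for its difficulty --- that $Q$ ``does not factorize across $A$ and $B$'' --- is false, and is precisely the missing idea. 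Every Pauli string splits as $P=P_A\otimes P_B$ with $P_A\in\mathcal{P}(2^{N_A})$, $P_B\in\mathcal{P}(2^{N_B})$, so $Q=Q_A\otimes Q_B$ up to a reordering of the tensor legs (this is established in App.~\ref{lemma} and explicitly invoked in the paper's proof), while $T_\sigma=T_\sigma^{(A)}\otimes T_\sigma^{(B)}$. Hence $\tr\bigl(Q\Pi_{\sym}^{(4)}T^{(A)}_{(12)(34)}\bigr)=\frac{1}{24}\sum_{\sigma\in\mathcal{S}_4}\tr\bigl(Q_A T_\sigma^{(A)}T^{(A)}_{(12)(34)}\bigr)\tr\bigl(Q_B T_\sigma^{(B)}\bigr)$, and each factor is an elementary Pauli-sum trace of the type computed in Lemma~\ref{lemmaPauli}; the analogous splitting handles $\tr\bigl(\Pi_{\sym}^{(4)}T^{(A)}_{(12)(34)}\bigr)$. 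Without this factorization your plan has no concrete route to the new trace, and your vague appeal to ``$Q\Pi_{\sym}^{(4)}$ living in a small subspace'' does not supply one.

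The shortcut you propose to bypass the algebra --- matching the $k\to\infty$ limit against the Haar value from \cite{Hammalungo_2012} --- determines only the $k$-independent part of the answer. Both $a_k$ and $b_k$ in Application~\ref{app4} contain a term proportional to $(f_{\theta}^{-})^{k}$ (with $\tr(Q\psi^{\otimes 4})=d^{-1}$ for $\psi=\ket{0}\bra{0}^{\otimes N}$), so the coefficient of $(f_{\theta}^{-})^{k}$ in $\Delta_{\mathcal{C}_k}\pur(\psi_U)_{A}$ is a nontrivial combination of the two traces above and cannot be ``read off'' without actually computing $\tr\bigl(T^{(A)}_{(12)(34)}Q\Pi_{\sym}^{(4)}\bigr)$; likewise the $k=0$ check against Eq.~\eqref{eq41} is a sanity test, not a derivation. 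The remaining ingredients of your plan --- the mean purity being $k$-independent because the Clifford group is already a $2$-design, and $\pur^2$ being a linear functional of the fourth moment probed by a permutation supported on $A$ --- are correct and coincide with the paper's treatment.
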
 

The proof can be found in App. \ref{prooflemmafluc}.

\begin{remark}
For the undoped, $k=0$, pure Clifford circuit, one finds
\be 
\aver{\pur^2(\psi_{C})_A}_{C\in\mathcal{C}(d)}=\frac{5d+1}{(d+1)(d+2)}
\ee 
Notice that, in the large $d$ limit, 
\ba
\aver{\pur^2(\psi_{C})_A}_{C\in\mathcal{C}(d)}&=& \frac{5}{d}+\Theta(d^{-2})\\
\aver{\pur^2(\psi_{U})_A}_{U\in\mathcal{U}(d)}&=&\frac{4}{d}+\Theta(d^{-2})
\ea
and thus have the same order. However, the next corollary shows  that - because of an exact cancellation - the fluctuations are very different in scaling with $d$.
\end{remark}

\begin{corollary} The fluctuations of the purity, for $d_A=d_B=\sqrt{d}$ and $\psi=\ket{0}\bra{0}^{\otimes N}$ for the Clifford circuit are
\ba 
\Delta_{\mathcal{C}(d)}\pur(\psi_C)_{A}&=&\frac{(d-1)^2}{(d+1)^2(d+2)}=\Theta(d^{-1})\label{eq41}\\
\Delta_{\mathcal{U}(d)}\pur(\psi_U)_{A}&=& \frac{2(d-1)^2}{(d+1)^2(d+2)(d+3)}=\Theta(d^{-2})\label{eq42}
\ea 
\end{corollary}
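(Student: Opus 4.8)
The plan is to read both identities straight off Lemma \ref{lemmafluc}, which already packages the full $k$-dependence of the subsystem purity fluctuations of the $k$-doped Clifford circuit for the balanced bipartition $d_A=d_B=\sqrt d$ and the stabilizer input $\psi=\ket{0}\bra{0}^{\otimes N}$. First I would specialize Eq.\eqref{fluctclifforddoped} to $k=0$: since $(f_\theta^-)^0=1$ the bracket becomes $2+(d+1)=d+3$, and this factor $(d+3)$ cancels against the one in the denominator, leaving $\Delta_{\mathcal{C}(d)}\pur(\psi_C)_A=(d-1)^2/[(d+1)^2(d+2)]$. Expanding at large $d$ this is $1/d+\Theta(d^{-2})$, hence $\Theta(d^{-1})$, which is exactly Eq.\eqref{eq41}.

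For the unitary average I would instead take $k\to\infty$ in Eq.\eqref{fluctclifforddoped}. By the discussion following Application \ref{app4} (equivalently Corollary \ref{corollaryptheta}, which rests on Application \ref{thconvergence}) one has $f_\theta^-<1$ whenever $\theta\neq\pm\pi/2$, so $(d+1)(f_\theta^-)^k\to 0$ and the bracket tends to $2$. Since $\lim_{k\to\infty}\Phi^{(4)}_{\mathcal{C}_k}(\psi^{\otimes 4})=\Phi^{(4)}_{\mathcal{U}(d)}(\psi^{\otimes 4})$ (Corollary \ref{corollaryptheta}), and $\pur(\psi_U)_A$ and $\pur(\psi_U)_A^2$ are fixed degree-four polynomials in the entries of $\psi_U$ — i.e. traces of $\psi_U^{\otimes 4}$ against fixed operators built from partial swaps — the $k\to\infty$ limit of $\Delta_{\mathcal{C}_k}\pur(\psi_U)_A$ is the unitary fluctuation. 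This gives $\Delta_{\mathcal{U}(d)}\pur(\psi_U)_A=2(d-1)^2/[(d+1)^2(d+2)(d+3)]=\Theta(d^{-2})$, matching Eq.\eqref{eq42} and the value already quoted in Eq.\eqref{cliffordfluct}. As an independent check I would recompute this unitary value directly from $\Phi^{(4)}_{\mathcal{U}(d)}(\psi^{\otimes 4})=\Pi_{\sym}^{(4)}/D_{\sym}$ using the swap-trick expression for $\pur^2$; it should reproduce the same rational function.

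The computation inside the corollary is otherwise routine, so the item worth flagging is conceptual rather than technical. The preceding Remark shows $\aver{\pur^2(\psi_C)_A}_{\mathcal{C}(d)}=5/d+\Theta(d^{-2})$ and $\aver{\pur^2(\psi_U)_A}_{\mathcal{U}(d)}=4/d+\Theta(d^{-2})$ agree to leading order, while $\aver{\pur}_{\mathcal{C}(d)}=\aver{\pur}_{\mathcal{U}(d)}$ exactly by Eq.\eqref{purityaverage}; hence the $\Theta(d^{-1})$ versus $\Theta(d^{-2})$ gap between the two variances arises entirely from a cancellation of the $O(d^{-1})$ term that occurs in the unitary case but fails for Clifford. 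The upshot is that the exact rational formulas are essential — one cannot work only to leading order — so the genuine effort is in establishing Lemma \ref{lemmafluc} itself (via Theorem \ref{th1} and Application \ref{application1}), which is deferred to App. \ref{prooflemmafluc}; granting that lemma, the corollary is just the two substitutions $k=0$ and $k\to\infty$ followed by elementary algebra.
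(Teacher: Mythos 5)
Your proposal is correct, and for Eq.~\eqref{eq41} it coincides with the paper's proof: set $k=0$ in Lemma~\ref{lemmafluc}, note $(f_\theta^-)^0=1$ so the bracket becomes $d+3$ and cancels, giving $(d-1)^2/[(d+1)^2(d+2)]$. Where you diverge is Eq.~\eqref{eq42}: the paper does not derive it at all but simply cites \cite{Hammalungo_2012}, whereas you obtain it internally by taking $k\to\infty$ in Eq.~\eqref{fluctclifforddoped} and identifying the limit with the unitary fluctuation via Corollary~\ref{corollaryptheta}, using that $\pur^2(\psi_U)_A$ is a fixed linear functional (a trace against $T^{(A)}_{(12)(34)}$) of $\Phi^{(4)}_{\mathcal{C}_k}(\psi^{\otimes 4})$ and that $\aver{\pur(\psi_U)_A}$ is ensemble-independent by Eq.~\eqref{purityaverage}, so the whole discrepancy lives in the fourth moment. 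This argument is sound (no circularity: the proof of Lemma~\ref{lemmafluc} in App.~\ref{prooflemmafluc} never invokes Eq.~\eqref{eq42}, and it requires only $\theta\neq\pm\pi/2$ so that $f_\theta^-<1$ at fixed $d$), and your proposed cross-check — computing $\aver{\pur^2}$ directly from $\Phi^{(4)}_{\mathcal{U}(d)}(\psi^{\otimes 4})=\Pi^{(4)}_{\sym}/D_{\sym}$, Eq.~\eqref{haaraverageofstate} — is essentially what the appendix already records in Eq.~\eqref{purfluck}, since subtracting $\aver{\pur}^2_{\haar}=4d/(d+1)^2$ from $(4d^2+18d+2)/[(d+1)(d+2)(d+3)]$ reproduces $2(d-1)^2/[(d+1)^2(d+2)(d+3)]$. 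What each route buys: the paper's citation is shorter but leaves Eq.~\eqref{eq42} resting on external results, while your derivation makes the corollary self-contained within the paper's own machinery and makes transparent why the Clifford and unitary variances separate only through the exact rational formulas (the $O(d^{-1})$ cancellation you flag), not at leading order in the second moments.
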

\begin{proof}
Eq.\eqref{eq41} follows immediately from Lemma \ref{lemmafluc}  by setting $k=0$, while Eq.\eqref{eq42} can be found in \cite{Hammalungo_2012}. \end{proof}
\begin{corollary}\label{iffpurity}
For any $\theta\neq \pm \pi/2$, for $d_{A}=d_{B}=\sqrt{d}$ and $\psi=\ket{0}\bra{0}^{\otimes N}$, iff $k=\Theta(\log d)$
\be
\delta_{\pur}^{(k)}\equiv |\Delta_{\mathcal{C}_k}\pur(\psi_U)_{A}-\Delta_{\mathcal{U}(d)}\pur(\psi_U)_{A}|=\Theta(d^{-2})
\ee
\end{corollary}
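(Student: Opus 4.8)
The plan is to reduce the statement to an elementary asymptotic analysis of the single geometric term $(f_{\theta}^{-})^{k}$ appearing in Lemma \ref{lemmafluc}. First I would subtract the Haar value \eqref{eq42} from the $k$-doped expression \eqref{fluctclifforddoped}: the constant $2$ inside the bracket cancels exactly and one obtains
\be
\delta_{\pur}^{(k)}=\Delta_{\mathcal{C}_k}\pur(\psi_U)_{A}-\Delta_{\mathcal{U}(d)}\pur(\psi_U)_{A}=\frac{(d-1)^2}{(d+1)(d+2)(d+3)}\,(f_{\theta}^{-})^{k}\,.
\ee
The rational prefactor is positive and satisfies $\frac{(d-1)^2}{(d+1)(d+2)(d+3)}=\Theta(d^{-1})$, so that $\delta_{\pur}^{(k)}=\Theta(d^{-1})(f_{\theta}^{-})^{k}$, with no extra polynomial-in-$k$ factor (in contrast with the OTOC case of Corollary \ref{c5}). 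Hence $\delta_{\pur}^{(k)}=\Theta(d^{-2})$ holds if and only if $(f_{\theta}^{-})^{k}=\Theta(d^{-1})$.

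Next I would convert this into a statement about $k$ by taking logarithms. By Application \ref{app4}, for $\theta\neq\pm\pi/2$ one has $0<f_{\theta}^{-}<1$, and moreover $f_{\theta}^{-}$ tends, as $d\to\infty$, to a constant strictly smaller than $1$ (with $\Theta(d^{-1})$ corrections), so $\log(1/f_{\theta}^{-})=\Theta(1)$ is bounded away from both $0$ and $\infty$. Therefore $(f_{\theta}^{-})^{k}=\Theta(d^{-1})$ is equivalent to $k\log(1/f_{\theta}^{-})=\log d+\Theta(1)$, i.e. to $k=\log d/\log(1/f_{\theta}^{-})+\Theta(1)=\Theta(\log d)$, which is the claim. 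For the converse (``only if'') direction I would argue contrapositively: $k=o(\log d)$ forces $(f_{\theta}^{-})^{k}=d^{-o(1)}\gg d^{-1}$, hence $\delta_{\pur}^{(k)}\gg d^{-2}$, while $k=\omega(\log d)$ forces $(f_{\theta}^{-})^{k}=d^{-\omega(1)}\ll d^{-1}$, hence $\delta_{\pur}^{(k)}=o(d^{-2})$; in either case $\delta_{\pur}^{(k)}\neq\Theta(d^{-2})$.

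The only point requiring care — the \emph{main obstacle}, mild as it is — is the $d$-dependence of $f_{\theta}^{-}$. Writing $f_{\theta}^{-}=g(\theta)+\Theta(d^{-1})$ with $g(\theta)\in(0,1)$, one must check that for $k=\Theta(\log d)$ the correction contributes only $(1+\Theta(d^{-1}))^{k}=1+\Theta(d^{-1}\log d)=1+o(1)$ to $(f_{\theta}^{-})^{k}$, so it does not disturb the leading order; this is immediate. It is also here that the hypothesis $\theta\neq\pm\pi/2$ is used essentially: at $\theta=\pm\pi/2$ the gate $P_{\theta}=S$ is Clifford, $f_{\theta}^{-}=1$, $\log(1/f_{\theta}^{-})=0$, and the argument (correctly) breaks down — consistent with the fact that $S$-doping never moves the purity fluctuations away from their Clifford value $\Theta(d^{-1})$.
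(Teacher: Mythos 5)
Your proposal is correct and follows essentially the same route as the paper's own proof: subtract the Haar value \eqref{eq42} from Lemma \ref{lemmafluc}, note the exact cancellation of the constant term leaving $\delta_{\pur}^{(k)}=\frac{(d-1)^2}{(d+1)(d+2)(d+3)}(f_{\theta}^{-})^{k}=\Theta(d^{-1})(f_{\theta}^{-})^{k}$, and conclude that $(f_{\theta}^{-})^{k}=\Theta(d^{-1})$ iff $k=\Theta(\log d)$. Your added remarks on the $d$-dependence of $f_{\theta}^{-}$ and the role of $\theta\neq\pm\pi/2$ only make explicit what the paper leaves implicit.
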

\begin{proof} From Eq.\eqref{fluctclifforddoped} one has
\be
\delta_{\pur}^{(k)}=\frac{(d-1)^2}{(d+1)(d+2)(d+3)}(f_{\theta}^{-})^{k}
\ee
for $k=0$ this difference is $\Theta(d^{-1})$, then in order to get $O(d^{-2})$ one needs to have
\be
f_{\theta}^{k}=\Theta(d^{-1})\iff k=\Theta(\log d)
\ee
moreover, note that the rate of convergence is dictated by $f_{\theta}$, which reaches its minimum value for $\theta=\pi/4$, that is the $T$-gate, cfr. Eq.\eqref{fpmtheta}.\end{proof}

\begin{lemma}\label{lemma2}
The fluctuations of the purity for a $k$-doped Clifford circuit, for $d_A=d_B=\sqrt{d}$ and $\psi$ be a random product state read
\be
\Delta_{\mathcal{C}_k}\pur(\psi_U)_{A}=\frac{2(d-1)^2}{(d+1)^2(d+2)(d+3)}+\frac{(d-1)(d^{(2-\log_2 5)}(d-3)-4)(f^{-}_{\theta})^{k}}{(d+1)(d+2)(d+3)}
\ee
\end{lemma}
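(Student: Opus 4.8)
The plan is to treat this exactly as Lemma~\ref{lemmafluc}, the only genuinely new ingredient being the value of $\tr(\psi^{\otimes 4}Q)$ for a random product state. First I would write the squared subsystem purity as a fourth-moment observable. Since $\pur(\psi_A)=\tr\bigl(\psi^{\otimes 2}\,(T^{A}_{(12)}\otimes\bbbone^{B})\bigr)$, with the swap acting only on the two copies of the $A$-register, one has $\pur(\psi_A)^2=\tr\bigl(\psi^{\otimes 4}\,\mathcal T\bigr)$ for $\mathcal T\equiv T^{A}_{(12)(34)}\otimes\bbbone^{B}$, so that $\aver{\pur(\psi_U)_A^2}_{U\in\mathcal C_k}=\tr\bigl(\Phi^{(4)}_{\mathcal C_k}(\psi^{\otimes 4})\,\mathcal T\bigr)$. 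Inserting Eq.~\eqref{maintheorem} together with Application~\ref{app4} gives $\aver{\pur(\psi_U)_A^2}_{U\in\mathcal C_k}=a_k\tr\bigl(Q\,\Pi_{\sym}^{(4)}\,\mathcal T\bigr)+b_k\tr\bigl(\Pi_{\sym}^{(4)}\,\mathcal T\bigr)$, where the two $\mathcal S_4$-traces are the same $d$-dependent rationals already evaluated in App.~\ref{prooflemmafluc} and carry no dependence on $\psi$. Combining this with $\aver{\pur(\psi_U)_A}_{U\in\mathcal C_k}=2\sqrt{d}/(d+1)$ (Eq.~\eqref{purityaverage}, unchanged by the doping because the purity is a second-moment quantity and $\mathcal C(d)$ is already a $2$-design), the entire dependence on the input state is carried by the single scalar $x\equiv\tr(\psi^{\otimes 4}Q)$ through the coefficients $a_k,b_k$ of Application~\ref{app4}; in fact, since $a_k$ and $b_k-D_{\sym}^{-1}$ are both proportional to $(f_{\theta}^{-})^{k}$, one obtains the structural identity
\be
\Delta_{\mathcal C_k}\pur(\psi_U)_A=\Delta_{\mathcal U(d)}\pur(\psi_U)_A+(f_{\theta}^{-})^{k}\bigl[\Delta_{\mathcal C(d)}\pur(\psi_C)_A-\Delta_{\mathcal U(d)}\pur(\psi_U)_A\bigr],
\ee
with $\Delta_{\mathcal C(d)}\pur(\psi_C)_A=\tfrac{(d-1)[d(d+1)\tr(\psi^{\otimes 4}Q)-2]}{(d+1)^{2}(d+2)}$ being the $k=0$ value of Eq.~\eqref{cliffordfluct}.

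The second step is to compute $x=\tr(\psi^{\otimes 4}Q)$ for a random product state. Factorizing $Q=d^{-2}\sum_{P\in\mathcal P(2^N)}P^{\otimes 4}$ and $\psi^{\otimes 4}=\bigotimes_{i=1}^{N}\psi_i^{\otimes 4}$ gives $\tr(\psi^{\otimes 4}Q)=d^{-2}\sum_{P}\prod_{i=1}^{N}\bigl(\tr(P_i\psi_i)\bigr)^{4}$. Averaging each qubit over the Haar measure on $\mathrm{SU}(2)$, $\overline{\bigl(\tr(\bbbone\,\psi_i)\bigr)^{4}}=1$, while for $P_i\in\{X,Y,Z\}$ one has $\overline{\bigl(\tr(P_i\psi_i)\bigr)^{4}}=\overline{(\hat n\cdot\hat e)^{4}}=1/5$, the fourth moment of one component of a Bloch vector uniform on the sphere; hence a Pauli string of Hamming weight $j$ contributes $5^{-j}$ and $\overline{\tr(\psi^{\otimes 4}Q)}=d^{-2}\sum_{j=0}^{N}\binom{N}{j}3^{j}5^{-j}=d^{-2}(8/5)^{N}=d^{\,1-\log_2 5}$. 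Because $\aver{\pur(\psi_U)_A^2}_{U\in\mathcal C_k}$ is linear in $\psi^{\otimes 4}$, averaging over the random product state just substitutes $\tr(\psi^{\otimes 4}Q)\mapsto d^{\,1-\log_2 5}$ (and concentration of measure shows a single typical product state realizes this up to exponentially small corrections).

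Finally I would substitute $x=d^{\,1-\log_2 5}$ into the closed form of the first step and simplify the resulting rational function of $d$, checking the two sanity limits: $k\to\infty$ must reproduce $\Delta_{\mathcal U(d)}\pur(\psi_U)_A=\tfrac{2(d-1)^2}{(d+1)^2(d+2)(d+3)}$ in accordance with Corollary~\ref{corollaryptheta}, and $k=0$ must reproduce the Clifford value of Eq.~\eqref{cliffordfluct}. I do not expect any real obstacle here: the only nonroutine ingredient is the single-qubit fourth-moment computation and the observation that it factorizes into $(1+3/5)^{N}$ over the qubits; the remaining steps are the same $\mathcal S_4$ bookkeeping and rational arithmetic already carried out in the proof of Lemma~\ref{lemmafluc}, so the statement is effectively a corollary of that proof with the new state moment inserted.
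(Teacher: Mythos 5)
Your plan is essentially the paper's own proof: the paper obtains this lemma precisely by plugging the random-product-state value $\tr(\psi^{\otimes 4}Q)=d^{1-\log_2 5}$ (its Application~\ref{averagetrq}, which you re-derive equivalently via the Bloch-vector fourth moment $\overline{(\hat n\cdot\hat e)^4}=1/5$ summed over Pauli strings) into the coefficients $a_k,b_k$ of Application~\ref{app4} and evaluating Eq.~\eqref{purfluctdoped}; your interpolation identity $\Delta_{\mathcal C_k}\pur=\Delta_{\mathcal U(d)}\pur+(f_\theta^-)^k\bigl[\Delta_{\mathcal C(d)}\pur-\Delta_{\mathcal U(d)}\pur\bigr]$ is a correct and tidy repackaging of that substitution, valid because $a_k$ and $b_k-D_{\sym}^{-1}$ carry all their $k$-dependence in the common factor $(f_\theta^-)^k$ while the mean purity is $k$-independent. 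One caveat: carrying your final simplification through gives $\Delta_{\mathcal C_k}\pur(\psi_U)_A=\frac{2(d-1)^2}{(d+1)^2(d+2)(d+3)}+\frac{(d-1)\bigl[(d+3)\,d^{2-\log_2 5}-4\bigr](f_\theta^-)^k}{(d+1)(d+2)(d+3)}$, i.e.\ a factor $(d+3)$ where the printed lemma has $(d-3)$; the $(d+3)$ version is the one consistent with the $k=0$ corollary and Eq.~\eqref{cliffordfluct}, with the general formula in the subsequent remark, and with Lemma~\ref{lemmafluc} upon substituting $\tr(\psi^{\otimes 4}Q)=1/d$, so the discrepancy is a typo in the stated lemma rather than a gap in your argument.
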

\begin{proof} The proof is straightforward and is left to the interested reader: by plugging \eqref{randomproductstate} into Eqs. \eqref{coefficientmaintheorem} and \eqref{maintheorem} and using Eq.\eqref{purfluctdoped} the calculation follows easily.\end{proof}

\begin{corollary}
The fluctuations of the purity for a non-doped Clifford circuit, for $d_A=d_B=\sqrt{d}$ and $\psi$ a random product state, are
\be
\Delta_{\mathcal{C}(d)}\pur(\psi_U)_{A}=\frac{(d-1)[d(d+1)d^{1-\log_{2}5}-2]}{(d+1)^2(d+2)}
\ee
\end{corollary}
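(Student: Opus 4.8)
The plan is to read this off as the $k=0$ case of Lemma~\ref{lemma2} — setting $(f^{-}_{\theta})^{0}=1$ and collecting the two terms over the common denominator $(d+1)^2(d+2)(d+3)$ — but the cleaner route, and the one I would actually write out, is to specialize the general Clifford purity-fluctuation formula Eq.~\eqref{cliffordfluct}, $\Delta_{\mathcal{C}(d)}\pur(\psi_C)_A=\frac{(d-1)[d(d+1)\tr(Q\psi^{\otimes 4})-2]}{(d+1)^2(d+2)}$, which holds for \emph{any} pure input and isolates all the input-dependence into the single scalar $\tr(Q\psi^{\otimes 4})$. The corollary then amounts to evaluating this scalar on a Haar-random product state $\psi=\bigotimes_{i=1}^{N}\psi_i$.

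The one real computation is therefore $\tr(Q\psi^{\otimes 4})$ for $\psi=\bigotimes_i\psi_i$. First I would use the factorization $Q=\bigotimes_{i=1}^{N}Q_1$ with $Q_1\equiv\tfrac{1}{4}\sum_{P\in\mathcal{P}(2)}P^{\otimes 4}$: this holds because each $N$-qubit Pauli splits as $P=\bigotimes_iP_i$, so $P^{\otimes 4}\simeq\bigotimes_iP_i^{\otimes 4}$ after regrouping the $4N$ tensor legs, and $\tfrac{1}{d^2}\,4^{N}=1$. Hence $\tr(Q\psi^{\otimes 4})=\prod_{i=1}^{N}\tr(Q_1\psi_i^{\otimes 4})$. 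Haar-averaging each single-qubit factor replaces $\psi_i^{\otimes 4}$ by $\Pi_{\sym}^{(4)}/5$ on $\mathbb{C}^2$ (the single-qubit symmetric subspace of $4$ copies has dimension $\binom{5}{4}=5$), and $\tr(Q_1\Pi_{\sym}^{(4)})=\tfrac{1}{4}\big(\tr(\Pi_{\sym}^{(4)})+\tr(X^{\otimes 4}\Pi_{\sym}^{(4)})+\tr(Y^{\otimes 4}\Pi_{\sym}^{(4)})+\tr(Z^{\otimes 4}\Pi_{\sym}^{(4)})\big)=\tfrac{1}{4}(5+1+1+1)=2$, using $\tr(O^{\otimes 4}\Pi_{\sym}^{(4)})=\sum_{j=0}^{4}a^{j}b^{4-j}$ for a single-qubit $O$ with eigenvalues $a,b$ (so this complete homogeneous symmetric polynomial equals $5$ for $\bbbone$ and $1$ for $X,Y,Z$). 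Therefore $\tr(Q_1\psi_i^{\otimes 4})=2/5$ and $\tr(Q\psi^{\otimes 4})=(2/5)^{N}=2^{N}/5^{N}=d/d^{\log_2 5}=d^{1-\log_2 5}$; plugging into Eq.~\eqref{cliffordfluct} gives $\frac{(d-1)[d(d+1)d^{1-\log_2 5}-2]}{(d+1)^2(d+2)}$, as claimed.

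I do not expect any genuine obstacle: this is a routine corollary, and the only steps needing a line of care are the tensor-product factorization $Q=\bigotimes_iQ_1$ and the elementary single-qubit evaluation $\tr(Q_1\Pi_{\sym}^{(4)})=2$. It is worth flagging, as elsewhere in the paper, that $\tr(Q\psi^{\otimes 4})=d^{1-\log_2 5}$ is the typical (product-factor Haar-averaged) value, matching the $\Theta(d^{1-\log_2 5})$ scaling already recorded in Eq.~\eqref{cliffordfluct}.
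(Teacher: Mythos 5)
Your proof is correct, but it takes a different route from the paper's: the paper disposes of this corollary in one line by setting $k=0$ in Lemma~\ref{lemma2}, whereas you specialize the general Clifford fluctuation formula \eqref{cliffordfluct} and evaluate $\tr(Q\psi^{\otimes 4})$ for a random product state from scratch, which in effect reproduces Application~\ref{averagetrq} of the appendix (your single--qubit computation $\tr(Q_1\Pi_{\sym}^{(4)})=\tfrac14(5+1+1+1)=2$, hence $\tr(Q\psi^{\otimes 4})=(2/5)^N=d^{1-\log_2 5}$, is exactly the paper's Eq.~\eqref{randomproductstate}, since $d^{-1-(\log_2 5-2)}=d^{1-\log_2 5}$). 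Your choice buys two things: it isolates the input dependence in the single scalar $\tr(Q\psi^{\otimes 4})$, so the same formula immediately reproduces the stabilizer case \eqref{eq41} as a sanity check; and, more importantly, it sidesteps an apparent typo in Lemma~\ref{lemma2} as printed, whose second term carries a factor $d^{\,2-\log_2 5}(d-3)$ where consistency with \eqref{cliffordfluct} (and with the general-bipartition remark, specialized to $d_A^2=d$) requires $(d+3)$; a literal $k=0$ substitution in the printed lemma would not recombine to the stated corollary, while your derivation lands on it directly. One small point worth a line of justification: since the fluctuation for fixed input depends on $\psi$ only affinely through $\tr(Q\psi^{\otimes 4})$, replacing each $\psi_i^{\otimes 4}$ by its local Haar average $\Pi_{\sym}^{(4)}/5$ inside that trace is legitimate and is precisely what ``random product state'' means here, as in the paper's own treatment.
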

\begin{proof} This result is obtained from Lemma \ref{lemma2} setting $k=0$.\end{proof}
\begin{remark}
The hypothesis $d_A=d_B=\sqrt{d}$ only simplifies the displayed formulas and the related considerations but does not change the general behavior.  For instance, in the case   $d_B \gg d_A \gg 1$ one has - after a lengthy calculation, 
\be 
\Delta_{\mathcal{C}_k}\pur(\psi_U)_{A}=\frac{2(d^2-d_{A}^2)(d^2_{A}-1)}{(d+1)^2(d+2)(d+3)d^2_{A}}+\frac{(d^2-d_{A}^2)(d_{A}^2-1)f^{k}_{\theta}(d(d+3)\tr\parent{Q\psi^{\otimes 4}}-4)}{(d-1)(d+1)(d+2)(d+3)d^2_{A}}
\ee 
\end{remark}



\section*{Conclusions and Outlook} 
In this paper, we showed that in a random Clifford circuit with $N$ qubits,  $\Theta(N)$ non Clifford gates are both necessary and sufficient to simulate quantum chaos. As a consequence, quantum chaos cannot be efficiently simulated on a classical computer, as the cost for simulating such circuits is exponential in the non Clifford resources.

In perspective, there are several open questions. One could generalize many of these results by proving that an $\epsilon$-approximate $2t$-OTOC characterizes an $\epsilon$-approximate $t$-design. Although the scaling is fixed to be $\Theta(N)$, the actual number of non Clifford resources is undetermined and it would be of practical importance in obtaining approximate $t$-designs with a noisy, intermediate-scale quantum computer. One could thus study the optimal arrangement of non Clifford resources. A related question is the onset of irreversibility in a closed quantum system in the sense of entanglement complexity\cite{chamon2014emergent} is driven by the doping of a Clifford circuit. Similarly, it would be interesting to show how the entanglement spectrum statistics converges with the doping\cite{zhou2020single}.

\section*{Acknowledgments} We acknowledge support from  NSF award number 2014000. The authors thank Claudio Chamon for important and enlightening discussions and comments.



\appendix

\section{Haar averages: Unitary vs Clifford group\label{haarintegration}}
In this section, we are going to display the explicit formula to average over the full unitary group and the full Clifford group without going into the group theoretic details. See \cite{collins2006integration,collins2003moments} for the Haar integration and \cite{roth2018recovering,zhu2016clifford} for the Clifford integration formula.
\subsection{Clifford group average}\label{Cliffhaar}
Starting from the result about the $4$-th moment of the  Haar average over the Clifford group in\cite{roth2018recovering} we are going to prove a useful lemma. 
\begin{lemma}
Let $\mathcal{O}\in\mathcal{B}(\mathcal{H}^{\otimes 4})$,  the integration formula for the Clifford group reads
\be
\aver{\mathcal{O}_{C}}_{C\in\mathcal{C}(d)}=\int_{\mathcal{C}(d)}\de C C^{\dag\otimes 4}\mathcal{O}C^{\otimes 4}=\sum_{\pi,\sigma\in\mathcal{S}_{4}}W^{+}_{\pi\sigma}\tr(\mathcal{O}QT_\pi)QT_{\sigma}+W^{-}_{\pi\sigma}\tr(\mathcal{O}Q^{\perp}T_{\pi})Q^{\perp}T_{\sigma}
\label{cliffordweing}
\ee
where $Q=\frac{1}{d^2}\sum_{P\in\mathcal{P}(d)}P^{\otimes 4}$ and $Q^{\perp}=\bbbone^{\otimes 4}-Q$, while $W^{\pm}_{\pi\sigma}$ are the generalized Weingarten functions, defined as
\be
W^{\pm}_{\pi\sigma}=\sum_{\substack{\lambda\vdash 4\\D^{\pm}_{\lambda}\neq 0}}\frac{d_{\lambda}^2}{(4!)^2}\frac{\chi^{\lambda}(\pi\sigma)}{D^{\pm}_\lambda}
\label{generalizedWeingarten}
\ee
here $\lambda$ labels the irreducible representations of the symmetric group $\mathcal{S}_4$, $\chi^{\lambda}(\pi\sigma)$ are the characters of $\mathcal{S}_4$, $d_\lambda$ is the dimension of the irreducible representation $\lambda$, $D_{\lambda}^{+}=\tr(QP_{\lambda})$ and $D_{\lambda}^{-}=\tr(Q^{\perp}P_{\lambda})$ where $P_\lambda$ are the projectors onto the irreducible representations of $\mathcal{S}_{4}$ and finally $T_{\sigma}$ are permutation operators corresponding to the permutation $\sigma\in \mathcal{S}_4$. 
\end{lemma}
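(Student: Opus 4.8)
The plan is to identify the Clifford twirl with an orthogonal projector and then compute that projector explicitly, reading off the generalized Weingarten functions as the spectral truncation of a Gram-matrix inverse. The starting observation is that $\Phi(\mathcal O):=\aver{\mathcal O_C}_{C\in\mathcal C(d)}=\int_{\mathcal C(d)}\de C\,C^{\dag\otimes4}\mathcal O C^{\otimes4}$ is idempotent, by left-invariance of the Haar measure on $\mathcal C(d)$, and self-adjoint for the Hilbert--Schmidt inner product $\langle A,B\rangle:=\tr(A^\dag B)$, by invariance under $C\mapsto C^{-1}$; hence $\Phi$ is the orthogonal projector onto its range, the fourth-order commutant $\mathcal A:=\sset{X\in\mathcal B(\mathcal H^{\otimes4}):\commut{X}{C^{\otimes4}}=0\ \forall C\in\mathcal C(d)}$.

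Next I would import from the analysis of the fourth moment of the multiqubit Clifford group \cite{roth2018recovering,zhu2016clifford} the description $\mathcal A=\spann{\sset{T_\pi,QT_\pi:\pi\in\mathcal S_4}}$. The structural fact that carries the rest is that $Q$ is an orthogonal projector commuting with every $T_\pi$ (it merely permutes identical tensor factors) and with every $C^{\otimes4}$ (from $CPC^\dag=\pm P'$ one gets $C^{\otimes4}P^{\otimes4}C^{\dag\otimes4}=P'^{\otimes4}$, the sign dying in the fourth power, and $P\mapsto P'$ is a bijection of the Pauli group, so $Q$ is Clifford-invariant). Thus $Q$ and $Q^\perp$ are central idempotents of $\mathcal A$, which therefore splits as
\be
\mathcal A=\spann{\sset{QT_\pi}}\ \oplus\ \spann{\sset{Q^\perp T_\pi}},
\ee
and the two summands are Hilbert--Schmidt orthogonal because $\tr\parent{QT_\pi(Q^\perp T_\sigma)^\dag}=\tr\parent{QQ^\perp T_\pi T_{\sigma^{-1}}}=0$. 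Hence $\Phi=\Phi^{Q}+\Phi^{Q^\perp}$, with $\Phi^{Q}$ the orthogonal projector onto $\spann{\sset{QT_\pi}}$ and $\Phi^{Q^\perp}$ onto $\spann{\sset{Q^\perp T_\pi}}$.

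The one genuine computation is to evaluate these two projectors. For a (possibly linearly dependent) family $\sset{v_\pi}$, the orthogonal projector onto its span sends $\mathcal O$ to $\sum_{\pi\sigma}(\mathcal G^{+})_{\sigma\pi}\langle v_\pi,\mathcal O\rangle v_\sigma$, where $\mathcal G_{\pi\sigma}=\langle v_\pi,v_\sigma\rangle$ and $\mathcal G^{+}$ is its Moore--Penrose pseudo-inverse. Taking $v_\pi=QT_\pi$ gives $\mathcal G_{\pi\sigma}=\tr(QT_{\pi^{-1}\sigma})$ (using $Q^2=Q$, $\commut{Q}{T_\pi}=0$) and $\langle v_\pi,\mathcal O\rangle=\tr(QT_{\pi^{-1}}\mathcal O)=\tr(\mathcal O QT_{\pi^{-1}})$, so after relabelling $\pi\mapsto\pi^{-1}$ the formula already has the shape of Eq.\eqref{cliffordweing} with $W^{+}$ identified with (a relabelling of) $\mathcal G^{+}$, and similarly $W^{-}$ from $\tr(Q^\perp T_{\pi^{-1}\sigma})$. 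To recover the explicit character formula \eqref{generalizedWeingarten}, I would use that $g\mapsto q(g):=\tr(QT_g)$ is a class function on $\mathcal S_4$ (conjugate $T_g$ by $T_h$ and commute $Q$ through), so $\mathcal G$ is a convolution operator on $\mathbb C[\mathcal S_4]$, block-diagonalized by the isotypic decomposition into irreducibles $\lambda\vdash4$; with the Schur--Weyl projectors $P_\lambda=\tfrac{d_\lambda}{4!}\sum_g\chi^\lambda(g)T_g$ the eigenvalue of $\mathcal G$ on the $\lambda$-block is proportional to $\tr(QP_\lambda)=D^{+}_\lambda$ (resp.\ $D^{-}_\lambda$ for the $Q^\perp$ sector), so inverting block by block while dropping the blocks with $D^{\pm}_\lambda=0$ --- which is precisely the Moore--Penrose pseudo-inverse --- yields exactly $W^{\pm}_{\pi\sigma}=\sum_{\lambda\vdash4,\,D^{\pm}_\lambda\neq0}\tfrac{d_\lambda^2}{(4!)^2}\tfrac{\chi^\lambda(\pi\sigma)}{D^{\pm}_\lambda}$, the residual relabelings of $\pi,\sigma$ being harmless since $\mathcal S_4$-characters are real. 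Substituting back reproduces Eq.\eqref{cliffordweing}.

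The main obstacle is not in any of these steps but in the structural input of the second one: the exact identification $\mathcal A=\spann{\sset{T_\pi,QT_\pi}}$ with no further invariants is the nontrivial, qubit-specific representation-theoretic fact imported from the cited work (and it is sharp only for $N$ large enough, the small-$N$ cases being special). Granting it, the one remaining delicate point is bookkeeping: the $\sset{QT_\pi}$ are generically linearly dependent --- exactly the $\lambda$ with $D^{+}_\lambda=0$ drop out --- so one must justify that truncating the spectral sum to $D^{\pm}_\lambda\neq0$ gives the correct orthogonal projector rather than naively inverting a full $24\times24$ Gram matrix.
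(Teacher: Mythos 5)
Your proposal is correct, but it takes a genuinely different route from the paper. The paper's proof is essentially a re-indexing exercise: it quotes the explicit fourth-moment formula for the Clifford group (Eq.~(32) of \cite{roth2018recovering}), expands the isotypic projectors as $\frac{d_\lambda}{4!}\sum_\tau\chi^\lambda(\tau)T_\tau$, reshuffles the permutation indices, and simply \emph{defines} $W^\pm_{\pi\sigma}$ by Eq.~\eqref{generalizedWeingarten} so that the result takes the form \eqref{cliffordweing}. You instead import only the weaker structural input --- that the fourth-order Clifford commutant is $\spann{\sset{T_\pi,QT_\pi}}$ --- and re-derive the coefficients yourself: twirl $=$ orthogonal projector onto the commutant, the central idempotents $Q,Q^\perp$ split it into two Hilbert--Schmidt-orthogonal sectors, and the projector onto each sector is the Gram pseudo-inverse construction, block-diagonalized by characters because $g\mapsto\tr(QT_g)$ is a class function. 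This is the Collins--\'Sniady derivation of Weingarten calculus transplanted to the Clifford setting; it buys an actual explanation of where \eqref{generalizedWeingarten} comes from and why the sum is truncated to $D^\pm_\lambda\neq0$ (the pseudo-inverse), which the paper's citation-plus-relabelling argument never addresses, at the cost of the linear-dependence bookkeeping you rightly flag. Note that both arguments rest on the same nontrivial external fact about the Clifford commutant \cite{zhu2016clifford,roth2018recovering}, so yours is not more elementary in that respect, and it inherits the same small-$N$ caveat ($d\ge 4$). One point to make exact if you write this up: your Gram-block eigenvalue is $\tfrac{4!}{d_\lambda^2}\tr(Q\Pi_\lambda)$ with $\Pi_\lambda$ the full isotypic projector, so a factor of $d_\lambda$ must be tracked when matching "proportional to $D^+_\lambda$" against \eqref{generalizedWeingarten}; the paper's notation ($P_\lambda$ versus the isotypic $\Pi_\lambda$) is loose on exactly this normalization, and your derivation should pin it down rather than inherit the ambiguity.
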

\begin{proof}
The projectors onto the irreducible representations of $\mathcal{S}_4$ read
\be
\Pi_\lambda^{(4)}=\frac{d_\lambda}{4!}\sum_{\tau\in\mathcal{S}_{4}}\chi^{\lambda}(\tau)T_{\tau}
\ee
Starting from the integration formula $(32)$ in \cite{roth2018recovering} we have
\ba
\aver{\mathcal{O}_C}_{C\in\mathcal{C}(d)}\hspace{-0.5cm}&&=\frac{1}{(4!)^2}\sum_{\lambda \vdash 4, l(\lambda)\le d} d_{\lambda}^{2}\sum_{\sigma \in\mathcal{S}_{4}}\parent{\frac{1}{D_{\lambda}^{+}}\tr(\mathcal{O}QT_{\sigma})Q+\frac{1}{D_{\lambda}^{-}}\tr(\mathcal{O}Q^{\perp}T_{\sigma})Q^{\perp}}T_{\sigma}^{-1}\chi^{\lambda}(\tau)T_{\tau}\nonumber\\
&&=\frac{1}{(4!)^2}\sum_{\lambda \vdash 4, l(\lambda)\le d} d_{\lambda}^{2}\sum_{\tau\sigma\in\mathcal{S}_4}\parent{\frac{1}{D_{\lambda}^{+}}\tr(\mathcal{O}QT_{\sigma})Q+\frac{1}{D_{\lambda}^{-}}\tr(\mathcal{O}Q^{\perp}T_{\sigma})Q^{\perp}}\chi^{\lambda}(\tau)T_{\sigma^{-1}\tau}\nonumber\\&&=\frac{1}{(4!)^2}\sum_{\lambda \vdash 4, l(\lambda)\le d} d_{\lambda}^{2}\sum_{\pi,\sigma\in\mathcal{S}_{4}}\parent{\frac{1}{D_{\lambda}^{+}}\tr(\mathcal{O}QT_{\sigma})Q+\frac{1}{D_{\lambda}^{-}}\tr(\mathcal{O}Q^{\perp}T_{\sigma})Q^{\perp}}\chi^{\lambda}(\sigma\pi)T_{\pi}
\ea
At this point, we just define
\be
W^{\pm}_{\pi\sigma}=\sum_{\substack{\lambda\vdash 4\\D^{\pm}_{\lambda}\neq 0}}\frac{d_{\lambda}^2}{(4!)^2}\frac{\chi^{\lambda}(\pi\sigma)}{D^{\pm}_\lambda}
\ee
and the derivation is complete. 
\end{proof}

An important property that will be used throughout the paper is the following:
\be
[Q,T_{\pi}]=0, \quad \forall\, \pi\in\mathcal{S}_{4}
\ee
another important property is that $Q$ is a projector, namely $Q^2=Q$.
Another useful result, related to the generalized Weingarten functions is
\be
\sum_{\pi,\sigma\in\mathcal{S}_4}W^{\pm}_{\pi\sigma\in\mathcal{S}_4}T_{\pi}=\frac{\Pi_{\sym}^{(4)}}{D^{\pm}}
\label{weingartenpsym}
\ee
the proof comes from Eq. $\eqref{generalizedWeingarten}$ and from $\Pi_{\sym}^{(4)}=(4!)^{-1}\sum_{\pi\in\mathcal{S}_4}T_{\pi}$. 
 
\subsection{Unitary group average}
Let $\mathcal{O}\in\mathcal{B}(\mathcal{H}^{\otimes t})$ be a bounded operator on $t$-copies of $\mathcal{H}$, then the Haar average reads\cite{collins2003moments,collins2006integration}
\be
\aver{\mathcal{O}_U}_{U\in\haar}=\int_{\mathcal{U}(d)}\de U U^{\dag\otimes t}\mathcal{O}U^{\otimes t}=\sum_{\pi,\sigma\in\mathcal{S}_{t}}W_{\pi\sigma}\tr(\mathcal{O}T_{\sigma})T_{\pi}
\label{haarformulageneral}
\ee
where $T_{\pi}$ is the permutation operator corresponding to the permutation $\pi \in S_{t}$, the $t$-dimensional symmetric group and $W_{\pi\sigma}$ are the Weingarten functions defined as
\be
W_{\pi\sigma}=\sum_{\lambda\vdash t}\frac{d_{\lambda}^{2}}{(t!)^2}\frac{\chi^{\lambda}(\pi\sigma)}{D_{\lambda}}\label{weingartenhaar}
\ee
where $D_{\lambda}=\tr(\Pi^{(4)}_{\lambda})$.
 
 \subsection{A couple of Haar averages over $\mathcal{U}(d)$}
\subsubsection{The average purity\label{puraveragesec}}
Let us calculate the average purity for the output state $\psi_{U}$, for $U\in\mathcal{U}(d)$ or $U\in\mathcal{C}(d)$; indeed the result of the average for the two groups is the same because the Clifford group forms a unitary $3$-design and being a $t$-design means being a $\tilde{t}$-design for any $\tilde{t}\le t$. Then, the average purity
\be
\aver{\pur(\psi_{U})_A}_{U\in\mathcal{U}(d)}=\int_{\haar} \de U\tr(T_{(12)}^{(A)}U^{\otimes 2}\psi^{\otimes 2}U^{\dag\otimes 2})=\sum_{\pi,\sigma\in \mathcal{S}_2}W_{g}(\pi\sigma)\tr(\psi^{\otimes 2}T_{\sigma})\tr(T_{(12)}^{(A)}T_{\pi})
\ee
Since $T_{\sigma}\psi^{\otimes 2}=\psi^{\otimes 2}$ as long as $\psi$ is a pure state, we have
\be
\aver{\pur(\psi_{U})_A}_{U\in\mathcal{U}(d)}=\sum_{\pi,\sigma\in \mathcal{S}_2}W_{g}(\pi\sigma)\tr(T_{(12)}^{(A)}T_{\pi})=\frac{1}{D_\sym}\tr(T_{(12)}^{(A)}\Pi_\sym^{(2)})
\ee
where $\Pi_\sym^{(2)}\equiv\frac{1}{2}\sum_{\sigma\in\mathcal{S}_2}T_{\sigma}$ and $D_{\sym}=\tr(\Pi_{\sym}^{(2)})=d(d+1)/2$. Then, since $T_{\sigma}=T_{\sigma}^{(A)}\otimes T_{\sigma}^{(B)}$, see Sec. \ref{lemma} for a more rigorous treatment, we have
\ba
\aver{\pur(\psi_{U})_A}_{U\in\mathcal{U}(d)}&=&\frac{1}{d(d+1)}\left(\tr(T_{(12)}^{(A)}T_{(12)}^{(A)})\tr(T_{(12)}^{(B)})+\tr(T_{(12)}^{(A)}\bbbone^{(A)\otimes 2})\tr(\bbbone^{(B)\otimes 2})\right)\nonumber\\&=&\frac{2}{d(d+1)}(d_{A}^{2}d_{B}+d_{A}d_{B}^{2})=\frac{d_A+d_B}{d_{A}d_{B}+1}
\ea
where we have used $d=d_Ad_B$.
\subsubsection{The average state $\aver{\psi_{U}^{\otimes 4}}_{U\in\mathcal{U}(d)}$}
Let $\psi_{U}$ be the output state of a quantum circuit $U$. Let us average the fourth tensor power of this output state for $U\in\mathcal{U}(d)$. Using formula \eqref{haarformulageneral} we have
\be
\aver{\psi_{U}^{\otimes 4}}_{U\in\mathcal{U}(d)}= \sum_{\pi,\sigma\in\mathcal{S}_4}W_{\pi\sigma}\tr(T_\sigma \psi^{\otimes 4})T_{\pi}=\sum_{\pi,\sigma\in\mathcal{S}_4}W_{\pi\sigma}T_{\pi}=\frac{\Pi_{\sym}^{(4)}}{D_{\sym}}
\label{haaraverageofstate}
\ee
where we used the fact that $T_\sigma\psi^{\otimes 4}=\psi^{\otimes 4}$ for any permutation operator $T_\sigma$ and that $\sum_{\pi,\sigma\in\mathcal{S}_4}W_{\pi\sigma}T_\pi=\Pi^{(4)}_\sym/D_\sym$, where $D_\sym=\tr(\Pi^{(4)}_\sym)$.

\section{Proofs}
\subsection{Proof of Theorem \ref{th1}}\label{proofmaintheorem}
Let $\mathcal{O}\in\mathcal{B}(\mathcal{H}^{\otimes 4})$  and let the $\Phi^{(4)}_{\mathcal{C}_k} (\mathcal O)$ be its output through the $(\mathcal{C}_k,4)$-fold channel. Then
\be
\Phi^{(4)}_{\mathcal{C}_k}(\mathcal{O}) :=\aver{\mathcal{O}_{U}}_{U\in\mathcal{C}_k}\equiv \langle C_{k}^{\otimes 4}K_{i_k}^{\otimes 4}\dots C_{1}^{\otimes 4}K_{i_{1}}^{\otimes 4}C_{0}^{\otimes 4}\mathcal{O}C_{0}^{\dag\otimes 4}K_{i_1}^{\dag\otimes 4}C_{1}^{\otimes 4}\dots K_{i_{k}}^{\dag\otimes 4}C_{k}^{\dag\otimes 4}\rangle_{C_1\ldots C_k\in\mathcal{C}(d)}
\label{Uk}
\ee
then since the averages over $C_{i}$ for $i=1,\dots, k$ are independent from each other, we can also write
\be
\aver{\mathcal{O}_{U}}_{U\in\mathcal{C}_k}=\aver{\mathcal{C}_{k}^{\otimes 4}K_{i_k}^{\otimes 4} \dots K_{i_{2}}^{\otimes 4}\aver{C_{1}^{\otimes 4}K_{i_{1}}^{\otimes 4}\aver{\mathcal{C}_{0}^{\otimes 4}\mathcal{O}\mathcal{C}_{0}^{\dag\otimes 4}}_{C_0\in\mathcal{C}(d)}K_{i_{1}}^{\dag\otimes 4}C_{1}^{\dag\otimes 4}}_{C_1\in\mathcal{C}(d)}K_{i_{2}}^{\dag\otimes 4}\dots K_{i_k}^{\dag\otimes 4}\mathcal{C}^{\otimes 4}_{k}}_{C_k\in\mathcal{C}(d)}
\ee
The first Clifford average before inserting any single qubit $K$-gate reads
\be
\aver{\mathcal{O}_{U}}_{U\in\mathcal{C}_0}=\sum_{\pi,\sigma\in\mathcal{S}_4}[W^{+}_{\pi\sigma}\tr(\mathcal{O}QT_{\sigma})-W^{-}_{\pi\sigma}\tr(\mathcal{O}Q^{\perp}T_{\sigma})]QT_{\pi}+\sum_{\pi,\sigma\in\mathcal{S}_4}W^{-}_{\pi\sigma}\tr(\mathcal{O}Q^{\perp}T_{\sigma})T_\pi
\ee
where we have used Eq.\eqref{cliffordweing}. We can recast it as
\be
\aver{\mathcal{O}_{U}}_{U\in\mathcal{C}_0}=\sum_{\pi\in\mathcal{S}_4}(c_{\pi}(\mathcal{O})Q+b_{\pi}(\mathcal{O}))T_{\pi}
\ee
where
\ba
c_{\pi}(\mathcal{O})&=&\sum_{\sigma\in\mathcal{S}_4}[W^{+}_{\pi\sigma}\tr(\mathcal{O}QT_{\sigma})-W^{-}_{\pi\sigma}\tr(\mathcal{O}Q^{\perp}T_{\sigma})]\\
b_{\pi}(\mathcal{O})&=&\sum_{\sigma\in\mathcal{S}_4}W^{-}_{\pi\sigma}\tr(\mathcal{O}Q^{\perp}T_{\sigma})
\ea
Now we need to apply the first $K_{i_1}$-gate on the $i$-th qubit; noting that $[T_{\pi},K_{i_1}^{\otimes 4}]=0
$ for all $\pi\in\mathcal{S}_4$, we have
\be
\aver{\mathcal{O}_{U}}_{U\in\mathcal{C}_0}\rightarrow K^{\dag\otimes 4}_{i_1}\aver{\mathcal{O}_{U}}_{U\in\mathcal{C}_0}K_{i_1}^{\otimes 4}=\sum_{\pi\in\mathcal{S}_4}(c_{\pi}(\mathcal{O})K_{i_1}^{\otimes 4}QK_{i_1}^{\dag\otimes 4}+b_{\pi}(\mathcal{O}))T_{\pi}
\ee
and then average over another Clifford layer, knowing that the Clifford operator only acts non trivially only on the operator  $K_{i_1}^{\otimes 4}QK_{i_1}^{\dag\otimes 4}T_{\pi}$ because $[C^{\otimes 4},T_\sigma]=0,\,\forall\sigma$
\be
\aver{\mathcal{O}_{U}}_{U\in\mathcal{C}_1}=\aver{C_{1}^{\otimes 4}K_{i_1}^{\otimes 4}\aver{\mathcal{O}_{U}}_{U\in\mathcal{C}_0}K_{i_1}^{\dag\otimes 4}C_{1}^{\dag\otimes 4}}_{C_1\in \mathcal{C}(d)}=\sum_{\pi\in\mathcal{S}_4}c_\pi(\mathcal{O})\aver{C_{1}^{\dag\otimes 4}K_{i_1}^{\otimes 4}QK_{i_1}^{\dag\otimes 4}T_\pi C_{1}^{\dag\otimes 4}}_{C_1\in\mathcal{C}(d)}+b_{\pi}(\mathcal{O})T_\pi
\ee
Then, from Eq.\eqref{cliffordweing}
\be
\aver{C_{1}^{\dag\otimes 4}K_{i_1}^{\otimes 4}QK_{i_1}^{\dag\otimes 4}T_\pi C_{1}^{\dag\otimes 4}}_{C_1\in\mathcal{C}(d)}=\sum_{\sigma\in\mathcal{S}_4}(\Xi_{\pi\sigma}Q+\Lambda_{\pi\sigma})T_\sigma
\ee
where $\Xi_{\pi\sigma}$ and $\Lambda_{\pi\sigma}$ read
\ba
\Xi_{\sigma\pi}&\equiv&\sum_{\tau\in\mathcal{S}_4}[W^{+}_{\pi\tau}\tr(T_{\sigma}K^{\otimes 4}QK^{\dag\otimes 4}QT_{\tau})-W^{-}_{\pi\tau}\tr(T_{\sigma}K^{\otimes 4}QK^{\dag\otimes 4}Q^{\perp}T_{\tau})]\\
\Lambda_{\sigma\pi}&\equiv&\sum_{\tau\in\mathcal{S}_4}W^{-}_{\pi\tau}\tr(T_{\sigma}K^{\otimes 4}QK^{\dag\otimes 4}Q^{\perp}T_{\tau})
\ea
we have defined the matrix $\Xi$ omitting the subscript $K_{i_{1}}$ because, as shown in Lemma \ref{lemma1}, it does not play any role. Thus, we have $\aver{\mathcal{O}_{U}}_{U\in\mathcal{C}_1}=\sum_{\pi,\sigma\in\mathcal{S}_4}(\Xi_{\sigma\pi}Q+\Lambda_{\sigma\pi})c_{\pi}(\mathcal{O})T_\sigma+b_{\pi}(\mathcal{O})T_{\pi}$; at the next iteration
\be
\aver{\mathcal{O}_{U}}_{U\in\mathcal{C}_2}=\sum_{\pi,\sigma\tau\in\mathcal{S}_4}(\Xi_{\tau\sigma}Q+\Lambda_{\tau\sigma})\Xi_{\sigma\pi}c_{\pi}(\mathcal{O})T_{\tau}+\sum_{\pi,\sigma\in\mathcal{S}_4}[\Lambda_{\sigma\pi}c_{\pi}(\mathcal{O})+\delta_{\pi\sigma}b_{\pi}(\mathcal{O})]T_{\sigma}
\ee
we can recast it as
\be
\aver{\mathcal{O}_{U}}_{U\in\mathcal{C}_2}=\sum_{\pi,\sigma\in\mathcal{S}_4}\left[\parent{(\Xi^2)_{\pi\sigma}Q+\Lambda_{\pi\sigma}+\sum_{\tau\in\mathcal{S}_4}\Lambda_{\pi\tau}\Xi_{\tau\sigma}}c_{\pi}(\mathcal{O})+\delta_{\pi\sigma}b_{\pi}(\mathcal{O})\right]T_{\sigma}
\ee
The latter relationship can be easily generalized to $k$ layers as
\be
\aver{\mathcal{O}_{U}}_{U\in\mathcal{C}_k}=\sum_{\pi,\sigma\in\mathcal{S}_4}\left[\parent{(\Xi^{k})_{\pi\sigma}Q+\Gamma_{\pi\sigma}^{(k)}}c_{\pi}(\mathcal{O})+\delta_{\pi\sigma}b_{\pi}(\mathcal{O})\right]T_{\sigma}
\ee
where we have defined $\Gamma^{(k)}_{\pi\sigma}\equiv\sum_{\tau\in\mathcal{S}_4}\Lambda_{\pi\tau}\sum_{i=0}^{k-1}(\Xi^i)_{\tau\sigma}$. This concludes the proof. \qed

\subsection{Proof of Application \ref{thconvergence}\label{proofthconvergence}}
From theorem \ref{th1}, the $(\mathcal{C}_k,4)$-fold channel reads
\be
\Phi^{(4)}_{\mathcal{C}_k}(\mathcal{O})=\sum_{\pi,\sigma\in\mathcal{S}_{4}}\left[\parent{(\Xi^{k})_{\pi\sigma}Q+\Gamma_{\pi\sigma}^{(k)}}c_{\pi}(\mathcal{O})+\delta_{\pi\sigma}b_{\pi}(\mathcal{O})\right]T_{\sigma}
\label{76}
\ee
First of all let us write this equation in matrix form for the coefficients; define $\mathbf{T}$ a vector with components the permutation operators $T_{\sigma}$, $\mathbf{c}$ the vector with components $c_{\pi}(\mathcal{O})$ and similarly for $\mathbf{b}$, then Eq.\eqref{76} becomes
\be
\Phi^{(4)}_{\mathcal{C}_k}(\mathcal{O})=(\Xi^k\cdot \mathbf{c},\mathbf{T})Q+(\Gamma^{(k)}\cdot \mathbf{c}+\mathbf{b},\mathbf{T})
\label{matrixformouk}
\ee
where the $\cdot$ stands for the row by column product and $(\cdot,\cdot)$ for the usual scalar product between lists. Recall that for the Unitary group the $(\mathcal{U}(d),4)$-fold channel reads
\be
\Phi^{(4)}_{\mathcal{U}(d)}(\mathcal{O})=(W\cdot \mathbf{t},\mathbf{T})
\ee
where $W$ is the matrix with components the Unitary group Weingarten functions, cfr Eq.\eqref{weingartenhaar}.
In the following we prove that the first piece in Eq.\eqref{matrixformouk} vanishes in the limit $k\rightarrow\infty$, while the second returns the matrix $W$.
\begin{lemma} For $K=P_{\theta}\equiv \ket{0}\bra{0}+e^{i\theta}\ket{1}\bra{1}$, the matrix $\Xi$, defined in Eq.\eqref{xi} has the following properties
\begin{itemize}
    \item $\Xi$ is symmetric;
    \item $\Xi$ has rank 6;
    \item the eigenvalues read
    \ba
    \lambda^{(\pm)}&=&f^{\pm}_{\theta}, \quad \mu(\lambda^{\pm})=1\nonumber\\
    \lambda^{av}&=&\frac{f^{+}_{\theta}+f^{-}_{\theta}}{2}, \quad \mu(\lambda^{av})=4
    \ea
    where $\mu(\lambda)$ stands for the algebraic multiplicity of the eigenvalue $\lambda$ and \ba\label{fpmtheta}
    f^{\pm}_{\theta}&=&\frac{7d^2\pm 3d +d(d\mp 3)\cos(4\theta)-8}{8(d^2-1)}< 1, \quad \forall\,\theta\neq \pm\frac{\pi}{2}\\
    f_{\theta}^{\pm}&=&1, \quad \theta=\pm\frac{\pi}{2}\nonumber
    \ea

    \item the maximum eigenvalue is $f^{+}_{\theta}$.
\end{itemize}

\end{lemma}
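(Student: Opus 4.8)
The plan is to recognize $\Xi$ as a class-function convolution on the symmetric group $\mathcal{S}_4$ and to diagonalize it block-by-block over the irreps of $\mathcal{S}_4$. The starting observation is that $R\equiv K^{\otimes 4}QK^{\dag\otimes 4}=\tfrac{1}{d^{2}}\sum_{P\in\mathcal{P}(d)}(KPK^{\dag})^{\otimes 4}$ is a sum of operators of the form $A^{\otimes 4}$, hence commutes with every copy-permutation $T_h$; since $Q$ does too, the functions $m^{+}(\rho)\equiv\tr(R\,Q\,T_\rho)$ and $m^{-}(\rho)\equiv\tr(R\,Q^{\perp}T_\rho)$ are (real) class functions of $\rho\in\mathcal{S}_4$. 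Using $[R,T_\sigma]=[Q,T_\sigma]=0$ one has $\tr(T_\sigma RQT_\tau)=m^{+}(\sigma\tau)$, and since $W^{\pm}_{\pi\sigma}=w^{\pm}(\pi\sigma)$ for the symmetric class functions $w^{\pm}$ read off from \eqref{generalizedWeingarten}, Eq.~\eqref{xi} becomes $\Xi_{\sigma\pi}=\sum_{\tau}w^{+}(\pi\tau)m^{+}(\sigma\tau)-\sum_{\tau}w^{-}(\pi\tau)m^{-}(\sigma\tau)$. I would then invoke the orthogonality identity $\sum_{\tau}a(\pi\tau)b(\sigma\tau)=\sum_{\mu\vdash 4}\tfrac{4!}{d_\mu}\langle a,\chi^\mu\rangle\langle b,\chi^\mu\rangle\,\chi^\mu(\pi\sigma^{-1})$ for class functions $a,b$, which yields $\Xi=\sum_{\mu\vdash 4}s_\mu M^{\mu}$, where $M^{\mu}_{\sigma\pi}=\chi^\mu(\pi\sigma^{-1})$ is $\tfrac{4!}{d_\mu}$ times the symmetric real projector onto the $\mu$-isotypic component of the regular representation, and $s_\mu=\tfrac{1}{d_\mu}\big[\tr(RQ\Pi^{(4)}_\mu)/D^{+}_\mu-\tr(RQ^{\perp}\Pi^{(4)}_\mu)/D^{-}_\mu\big]$ with $D^{+}_\mu=\tr(Q\Pi^{(4)}_\mu)$, $D^{-}_\mu=\tr(Q^{\perp}\Pi^{(4)}_\mu)$ (a term with $D^{\pm}_\mu=0$ omitted). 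This already delivers the first, third and fourth bullets at once: $\Xi$ is a real combination of symmetric matrices, hence symmetric; and since $M^\mu$ has eigenvalue $\tfrac{4!}{d_\mu}$ with multiplicity $d_\mu^{2}$ and $0$ otherwise, $\Xi$ has eigenvalue $s_\mu$ with multiplicity $d_\mu^{2}\in\{1,1,4,9,9\}$.

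The heart of the rank statement is then that $D^{+}_\mu=\tr(Q\Pi^{(4)}_\mu)=0$ for the two $3$-dimensional irreps $\mu=(3,1)$ and $\mu=(2,1,1)$. This is a short character computation: for a nontrivial Pauli $P$, $\tr(P^{\otimes 4}T_\tau)$ equals $d^{\#\mathrm{cyc}(\tau)}$ when every cycle of $\tau$ has even length and vanishes otherwise, so $\tr(Q\Pi^{(4)}_\mu)$ is proportional to $\sum_\tau\chi^\mu(\tau)d^{\#\mathrm{cyc}(\tau)}+(d^2-1)\sum_{\tau\ \mathrm{all\ even}}\chi^\mu(\tau)d^{\#\mathrm{cyc}(\tau)}$, and for the two $3$-dimensional characters these two sums cancel identically (equivalently $\sum_\rho\tr(QT_\rho)\chi^\mu(\rho)=0$ there). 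Since $Q,\Pi^{(4)}_\mu\ge 0$, $D^{+}_\mu=0$ forces $Q\Pi^{(4)}_\mu=0$, hence $\tr(RQ\Pi^{(4)}_\mu)=0$ and $\tr(RQ^{\perp}\Pi^{(4)}_\mu)=\tr(R\Pi^{(4)}_\mu)$; and because $\Pi^{(4)}_\mu$ commutes with $K^{\otimes 4}$ while $K^{\dag}K=\bbbone$, $\tr(R\Pi^{(4)}_\mu)=\tr(Q\Pi^{(4)}_\mu)=0$. Therefore $s_{(3,1)}=s_{(2,1,1)}=0$, while for $\mu\in\{(4),(1^4),(2,2)\}$ both $D^{\pm}_\mu$ are nonzero (for $d\ge 4$) and $s_\mu\neq0$, so $\operatorname{rank}\Xi=1^2+1^2+2^2=6$, matching the eigenvalue multiplicities $1+1+4$.

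It remains to evaluate the three surviving $s_\mu$ for $K=P_\theta$. Using $\tr(RQ\Pi^{(4)}_\mu)=D^{+}_\mu-\tr(RQ^{\perp}\Pi^{(4)}_\mu)$ (from $\tr(R\Pi^{(4)}_\mu)=D^{+}_\mu$) and $D_\mu\equiv D^{+}_\mu+D^{-}_\mu=\tr\Pi^{(4)}_\mu$, the formula simplifies to $s_\mu=\tfrac{1}{d_\mu}\big[1-\tr(RQ^{\perp}\Pi^{(4)}_\mu)\,D_\mu/(D^{+}_\mu D^{-}_\mu)\big]$. For $\mu=(4)$ (the symmetric subspace) the remaining trace is exactly $c_{QQ^{\perp}}=\tr(K^{\otimes 4}QK^{\dag\otimes 4}Q^{\perp}\Pi^{(4)}_{\sym})=\tfrac{(d+2)(d+4)}{24}\sin^{2}(2\theta)$ of Application~\ref{app4}; inserting the elementary character sums for $D_{(4)},D^{+}_{(4)},D^{-}_{(4)}$ gives $s_{(4)}=1-\tfrac{d(d+3)(1-\cos 4\theta)}{8(d^2-1)}=f^{-}_\theta$, and the analogous computation on the antisymmetric subspace gives $s_{(1^4)}=f^{+}_\theta$, each with multiplicity $1$; for $\mu=(2,2)$ the same scheme, with $\tr(RQ^{\perp}\Pi^{(4)}_{(2,2)})$ evaluated through the single-qubit factorization $Q=q\otimes Q'$ exactly as $c_{QQ^\perp}$ is computed in the appendix, yields $s_{(2,2)}=\tfrac12(f^{+}_\theta+f^{-}_\theta)$ with multiplicity $d_{(2,2)}^{2}=4$. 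Finally $f^{+}_\theta-f^{-}_\theta=\tfrac{6d(1-\cos 4\theta)}{8(d^2-1)}\ge 0$ gives $f^{-}_\theta\le\tfrac12(f^{+}_\theta+f^{-}_\theta)\le f^{+}_\theta$ (so $f^{+}_\theta$ is the maximum), and $1-f^{+}_\theta=\tfrac{d(d-3)(1-\cos 4\theta)}{8(d^2-1)}$, $1-f^{-}_\theta=\tfrac{d(d+3)(1-\cos 4\theta)}{8(d^2-1)}$ are both nonnegative and strictly positive unless $\cos 4\theta=1$, i.e.\ $f^{\pm}_\theta<1$ for non-Clifford $P_\theta$.

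I expect the main obstacle to be the genuinely computational input: evaluating $\tr(K^{\otimes 4}QK^{\dag\otimes 4}Q^{\perp}\Pi^{(4)}_\mu)$ for $\mu=(2,2)$ (and $(4)$, $(1^4)$) by the single-qubit factorization $Q=q\otimes Q'$ with $q=\tfrac14\sum_{P\in\{\bbbone,X,Y,Z\}}P^{\otimes 4}$ and $K_iXK_i^{\dag}=\cos\theta X+\sin\theta Y$, $K_iYK_i^{\dag}=-\sin\theta X+\cos\theta Y$, $K_iZK_i^{\dag}=Z$ — this is where the $\cos 4\theta$-dependence of $f^{\pm}_\theta$ is produced. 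The structural steps (the class-function reduction, the vanishing of the two $3$-dimensional blocks, and the final inequalities) are by contrast short character theory or elementary algebra.
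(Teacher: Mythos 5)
Your proof is correct in substance but follows a genuinely different route from the paper's, whose entire argument is that the lemma ``comes from direct calculation of the $24\times 24$ matrix $\Xi$'' for $K=P_\theta$. You instead exploit that $R=K^{\otimes 4}QK^{\dag\otimes 4}$ and $Q$ commute with every $T_\pi$, so each entry of $\Xi$ is a convolution of class functions on $\mathcal{S}_4$ and $\Xi$ block-diagonalizes over the irreps $\mu\vdash 4$: this delivers the symmetry of $\Xi$, the multiplicities $d_\mu^2$, and — via the character computation showing $D^+_\mu=\tr(Q\Pi^{(4)}_\mu)=0$ for the two three-dimensional irreps, hence $Q\Pi^{(4)}_\mu=0$ and $\tr(R\Pi^{(4)}_\mu)=\tr(Q\Pi^{(4)}_\mu)=0$ — the rank-$6$ statement, all structurally rather than numerically; I checked these steps and they are sound (e.g.\ $\sum_\tau\chi^{(3,1)}(\tau)\tr(QT_\tau)=3d^2+6d-3d^2-6d=0$, and similarly for $(2,1,1)$). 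What the paper's brute force buys is a one-shot verification of everything at once; what your route buys is an explanation of the multiplicity pattern $\{1,1,4\}$ and a reduction of the quantitative content to three scalar traces $\tr(RQ^{\perp}\Pi^{(4)}_\mu)$, of which the $\mu=(4)$ one is literally $c_{QQ^{\perp}}$ from App.~\ref{calculationcqcqqp} and gives $s_{(4)}=1-\frac{d(d+3)(1-\cos 4\theta)}{8(d^2-1)}=f^-_\theta$ exactly (consistent, independently, with the column-sum identity $\sum_\pi\Xi_{\pi\tau}=c_Q/D^+-c_{QQ^\perp}/D^-$ used in App.~\ref{proofcorollary1}). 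Two small repairs are needed: (i) a normalization slip — since $M^\mu_{\sigma\pi}=\chi^\mu(\pi\sigma^{-1})$ has eigenvalue $4!/d_\mu$, the decomposition consistent with your (correct) eigenvalue formula $s_\mu=\frac{1}{d_\mu}\bigl[\tr(RQ\Pi^{(4)}_\mu)/D^+_\mu-\tr(RQ^{\perp}\Pi^{(4)}_\mu)/D^-_\mu\bigr]$ is $\Xi=\sum_\mu\frac{d_\mu}{4!}s_\mu M^\mu$, not $\Xi=\sum_\mu s_\mu M^\mu$; and (ii) the traces for $\mu=(1^4)$ and $\mu=(2,2)$ are asserted rather than carried out, and they are exactly what pins the remaining eigenvalues to $f^+_\theta$ and $(f^+_\theta+f^-_\theta)/2$ — they do follow from the same single-qubit factorization of $Q$ used in the appendix (the antisymmetric sector must give $\tr(RQ^{\perp}\Pi^{(4)}_{(1^4)})=\frac{(d-2)(d-4)}{24}\sin^2 2\theta$, so that $1-s_{(1^4)}=\frac{d(d-3)(1-\cos4\theta)}{8(d^2-1)}$), so what remains is routine computation rather than a missing idea, but it should be written out for the proof to be complete; note also that your formulas implicitly require $d\ge 4$, i.e.\ at least two qubits, which is the regime the paper intends.
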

The proof comes from direct calculation of the $24\times 24$ matrix $\Xi$ with $K=P_{\theta}$. \qed

Since all the eigenvalues of $\Xi$ are less than $1$,
\be
\lim_{k\rightarrow\infty}(\Xi^k)_{\pi\sigma}=0, \quad \text{for all $\pi,\sigma$}
\ee
and
\be
\sum_{i=0}^{\infty}\Xi^i=(1-\Xi)^{-1}
\ee
thus $\Gamma^{(\infty)}=\Lambda (1-\Xi)^{-1}$.
Defining the vector $\mathbf{q}$ having components $\tr(\mathcal{O}QT_\sigma)$ and the vector $\mathbf{t}$ having components $\tr(\mathcal{O}T_{\sigma})$, from Eq.\eqref{ccoeff} and Eq.\eqref{bcoeff} we note that
\ba
\mathbf{c}&=&(W^++W^-)\cdot\mathbf{q}-W^- \cdot\mathbf{t}\nonumber\\
\mathbf{b}&=&W^{-}\cdot \mathbf{t}-W^{-}\cdot \mathbf{q}
\ea
where $W^{\pm}$ are the matrices with components the generalized Weingarten functions for the Clifford group, cfr. \eqref{generalizedWeingarten}. Therefore taking the limit $k\rightarrow\infty$ in Eq.\eqref{matrixformouk}
\be
\lim_{k\rightarrow\infty}\Phi^{(4)}_{\mathcal{C},k}(\mathcal{O})=\left([\Lambda (1-\Xi)^{-1}(W^{+}+W^{-})-W^-]\cdot \mathbf{q},\mathbf{T}\right)+\parent{[W^{-}-\Lambda(1-\Xi)^{-1}W^{-}]\cdot \mathbf{t},\mathbf{T}}
\ee
It is straightforward to check that
\be
W^{-}-\Lambda(1-\Xi)^{-1}W^{-}=W
\ee
Then, the vector $\mathbf{q}$ is in kernel of the matrix $\zeta=\Lambda (1-\Xi)^{-1}(W^{+}+W^{-})-W^-$, namely $\kerf(\zeta)$; indeed
\be
\kerf(\zeta)=\spann{\mathbf{e}^{(\alpha)}\,|\,\alpha=1,\dots,6}
\ee
where the non null components of these six vectors read
\ba
e_{(e)}^{(1)}&=&e_{(12)(34)}^{(1)}=e_{(13)(24)}^{(1)}=e_{(14)(23)}^{(1)}\nonumber\\
e_{(13)}^{(2)}&=&e_{(24)}^{(2)}=e_{(1432)}^{(2)}=e_{(1234)}^{(2)}\nonumber\\
e_{(14)}^{(3)}&=&e_{(23)}^{(3)}=e_{(1342)}^{(3)}=e_{(1243)}^{(3)}\nonumber\\
e_{(12)}^{(4)}&=&e_{(34)}^{(4)}=e_{(1324)}^{(4)}=e_{(1423)}^{(4)}\nonumber\\
e_{(132)}^{(5)}&=&e_{(124)}^{(5)}=e_{(143)}^{(5)}=e_{(234)}^{(5)}\nonumber\\
e_{(123)}^{(6)}&=&e_{(142)}^{(6)}=e_{(134)}^{(6)}=e_{(243)}^{(6)}
\ea
because of Lemma \ref{lemmaPauli} it is clear that $\mathbf{q}\in\kerf(\zeta)$, which proves the theorem. \qed


\subsection{Proof of Application \ref{application1}\label{proofcorollary1}}
We will make use of Theorem \ref{th1}
\be
\Phi^{(4)}_{\mathcal{C}_k}(\psi^{\otimes 4})=\sum_{\pi,\sigma\in\mathcal{S}_{4}}\left[\parent{(\Xi^{k})_{\pi\sigma}Q+\Gamma_{\pi\sigma}^{(k)}}c_{\pi}(\psi^{\otimes 4})+\delta_{\pi\sigma}b_{\pi}(\psi^{\otimes 4})\right]T_{\sigma}
\ee
Let us first compute the term $\sum_{\pi,\sigma\in\mathcal{S}_4}c_{\sigma}(\psi^{\otimes 4})(\Xi^k)_{\pi\sigma} Q T_\pi $; note that $[T_{\pi},Q]=0$. This is a fact that will be repeatedly exploited in this proof. First we prove that $c_{\sigma}=c$ independent from the specific permutation $\sigma$; from Eq.\eqref{bc_coeff} we have
\ba
c_{\sigma}(\psi^{\otimes 4})&=&\sum_{\pi\in\mathcal{S}_4}[W^{+}_{\pi\sigma}\tr(\psi^{\otimes 4}QT_{\pi})-W^{-}_{\pi\sigma}\tr(\psi^{\otimes 4}Q^{\perp}T_{\pi})]=\sum_{\pi\in\mathcal{S}_4}W^{+}_{\pi\sigma}\tr(Q\psi^{\otimes 4})-W^{-}_{\pi\sigma}\tr(Q^\perp\psi^{\otimes 4})\nonumber\\&=&\frac{1}{4!D^+}\tr(Q\psi^{\otimes 4})-\frac{1}{4!D^-}\tr(Q^\perp \psi^{\otimes 4})\equiv c
\ea
where we used $T_{\pi}\psi^{\otimes 4}=\psi^{\otimes 4}$ for all $\pi$ and $\sum_{\pi\in\mathcal{S}_4}W^{\pm}_{\pi\sigma}=(4!D^{\pm})^{-1}$. Then the sum can be written as
\ba
\sum_{\pi,\sigma\in\mathcal{S}_4}c_{\pi}(\psi^{\otimes 4})(\Xi^k)_{\pi\sigma} Q T_\sigma&=&c\sum_{\pi,\sigma\in\mathcal{S}_4}(\Xi^k)_{\pi\sigma} Q T_\sigma
=c\sum_{\pi,\sigma,\tau\in\mathcal{S}_4}\Xi_{\pi\tau}(\Xi^{k-1})_{\tau\sigma}QT_{\sigma}\nonumber\\&=&c\sum_{\sigma,\tau\in\mathcal{S}_4}\left(\sum_{\pi}\Xi_{\pi\tau}\right)(\Xi^{k-1})_{\tau\sigma}T_{\sigma}Q\label{xidec}
\ea
let us prove that $\sum_{\pi\in\mathcal{S}_4}\Xi_{\pi\tau}$ does not depend on $\tau$; from Eq.\eqref{xi} it is easy to see that
\be
\sum_{\pi\in\mathcal{S}_4}\Xi_{\pi\tau}=\frac{1}{D^{+}}\tr(\Pi_{\sym}^{(4)}K^{\otimes 4}QK^{\dag\otimes 4}Q)-\frac{1}{D^-}\tr(\Pi_{\sym}^{(4)}K^{\otimes 4}QK^{\dag\otimes 4}Q^{\perp})=\frac{c_Q}{D^+}-\frac{c_{QQ^{\perp}}}{D^-}
\ee
where we have used $\sum_{\pi\in\mathcal{S}_4}W^{\pm}_{\pi\sigma}=(4!D^{\pm})^{-1}$ and $\Pi_{\sym}^{(4)}T_{\tau}=\Pi_\sym^{(4)}$ for any $\tau\in\mathcal{S}_4$. The coefficients $c_Q$ and $c_{QQ^\perp}$ are to be computed in a straightforward way; for the case $K=P_{\theta}$ they are explicitly calculated in App. \ref{calculationcqcqqp}. Since $\sum_{\pi\in\mathcal{S}_4} \Xi_{\pi\tau}$ does not depend on $\tau$, the decomposition introduced in the last equality of Eq. $\eqref{xidec}$ can be reiterated $k$ times to  obtain 
\ba
\sum_{\pi,\sigma\in\mathcal{S}_4}c_{\pi}(\psi^{\otimes 4})(\Xi^k)_{\pi\sigma} Q T_\sigma&=&c\parent{\frac{c_Q}{D^+}-\frac{c_{QQ^\perp}}{D^-}}^{k}\sum_{\sigma\in\mathcal{S}_4}T_{\sigma}Q\nonumber\\&=&\parent{\frac{\tr(Q\psi^{\otimes 4})}{D^+}-\frac{\tr(\psi^{\otimes 4}Q^{\perp})}{D^-}}\parent{\frac{c_Q}{D^+}-\frac{c_{QQ^\perp}}{D^-}}^{k}\Pi_{\sym}^{(4)}Q
\label{s100}
\ea
Now we compute the term $\sum_{\pi,\sigma\in\mathcal{S}_4}\Gamma^{(k)}_{\pi\sigma}c_{\pi}(\psi^{\otimes 4})T_{\sigma}$ where $\Gamma^{(k)}_{\pi\sigma}$ is defined in Theorem \ref{th1}; since $c_{\pi}(\psi^{\otimes 4})$ is independent from the permutation $\pi$ we can write
\be 
\sum_{\pi,\sigma\in\mathcal{S}_4}\Gamma^{(k)}_{\pi\sigma}c_{\pi}(\psi^{\otimes 4})T_{\sigma}=c\sum_{\pi,\sigma\in\mathcal{S}_4}\Gamma^{(k)}_{\pi\sigma}T_{\sigma}=c\sum_{\pi,\sigma,\tau\in\mathcal{S}_4}\Lambda_{\pi\tau}\left(\sum_{i=0}^{k-1}(\Xi^{i})_{\tau\sigma}\right)T_{\sigma}
\ee 
It is easy to see $\sum_{\pi\in\mathcal{S}_4}\Lambda_{\pi\tau}=c_{QQ^\perp}/D^{-}$ does not depend on $\tau$; from this fact, we can use the same technique used above to compute $\sum_{\tau\in\mathcal{S}_4}\left(\sum_{i=0}^{k-1}(\Xi^{i})_{\tau\sigma}\right)$ and finally obtain
\be 
\sum_{\pi,\sigma\in\mathcal{S}_4}\Gamma^{(k)}_{\pi\sigma}c_{\pi}(\psi^{\otimes 4})T_{\sigma}=\frac{\tr(\psi^{\otimes 4}Q^{\perp})}{D^{-}}+\frac{c_{QQ^\perp}}{D^-}\left(\frac{\tr(Q\psi^{\otimes 4})}{D^+}-\frac{\tr(Q^\perp \psi^{\otimes 4})}{D^-}\right) \sum_{i=0}^{k-1}\left(\frac{c_Q}{D^+}-\frac{c_{QQ^{\perp}}}{D^-}\right)^{i} \Pi_{\sym}^{(4)}
\label{s101}
\ee 
The last term we need to evaluate is $\sum_{\pi}b_{\pi}(\psi^{\otimes 4})T_{\pi}$; as before, let us prove that $b_{\pi}(\psi^{\otimes 4})$ does not depend on $\pi$
\be
b_{\pi}(\psi^{\otimes 4})=\sum_{\sigma\in\mathcal{S}_4} W_{\pi\sigma}^{-}\tr\parent{\psi^{\otimes 4} Q^{\perp} T_{\sigma}}=\sum_{\sigma\in\mathcal{S}_4}W_{\pi\sigma}^{-}\tr\parent{\psi^{\otimes 4}Q^{\perp}} = \frac{1}{4!D^{-}}\tr({Q^{\perp}\psi^{\otimes 4}})
\ee
then
\be
\sum_{\pi\in\mathcal{S}_4}b_{\pi}(\psi^{\otimes 4})T_{\pi}=\frac{1}{D^{-}}\tr(Q^{\perp}\psi^{\otimes 4})\frac{1}{4!}\sum_{\pi\in\mathcal{S}_4}T_{\pi}=\frac{1}{D^{-}}\tr(Q^{\perp}\psi^{\otimes 4})\Pi_{\sym}^{(4)}
\label{s207}
\ee
Putting together Eqs. \eqref{s100}, \eqref{s101} and \eqref{s207} we obtain the final result in Eq.\eqref{maintheorem}. \qed
\subsection{Proof of Application \ref{app4}\label{calculationcqcqqp}}
In this section we evaluate $c_{Q}=\tr(K_{i_{j}}^{\otimes 4}Q K_{i_{j}}^{\dag\otimes 4}Q\Pi_{\sym}^{(4)})$ for $K= P_{\theta}\equiv \ket{0}\bra{0}+e^{i\theta}\ket{1}\bra{1}$. As pointed out in Lemma \ref{lemma1}, the position of the operator $K_{i_j}$ does not affect the calculations, so in the following we analyze the case in which the operator $K_{i}$ acts on the first qubit $i_{1}$.
The term $c_{Q}$ can be rewritten as
\be
\tr(K_{i_{1}}^{\otimes 4}QK_{i_{1}}^{\dag\otimes 4}Q\Pi_{\sym}^{(4)})=\tr(R^{-1}RK^{\otimes 4}_{i_{1}}R^{-1}RQR^{-1}RK_{i_{1}}^{\dag\otimes 4}R^{-1}RQR^{-1}R\Pi_{\sym}^{(4)})
\ee 
where $R\in \mathcal{S}_{4N}$ is a permutation operator whose action on a tensor product basis element is
\be
R\ket{i_{1} \ldots i_{N}}^{\otimes 4}\equiv\ket{i_{1}}^{\otimes 4}\ket{i_{2}\ldots i_{N}}^{\otimes 4}
\ee 
The adjoint action of the permutation $R$ allows us to rewrite the operator $Q$ as 
\be 
RQR^{-1}=\frac{1}{4}(Q_I+Q_X+Q_Y+Q_Z)
\ee 
where $X,Y,Z,I$ are single qubit Pauli matrices and for example $Q_X$ reads
\be
Q_{X}=\frac{1}{(d/2)^2}\sum_{P\in\mathcal{P}(d/2)}X^{\otimes 4}\otimes P^{\otimes 4}=X^{\otimes 4}\otimes Q_{d/2}
\ee
and $Q_{d/2}=(d/2)^{-2}\sum_{P\in\mathcal{P}(d/2)}P^{\otimes 4}$.
Similarly the adjoint action of $R$ on $K^{\otimes 4}$
\be
RK^{\otimes 4}_{i_1}R^{-1}=K^{\otimes 4}\otimes (\bbbone_{2}\otimes\dots\otimes \bbbone_{N})^{\otimes 4}
\ee
acting on the first qubit element of each copy of $\mathcal{H}$ it acts on the first four elements of $RQR^{-1}$. It is simple to see that the adjoint action of $R$ on the symmetric projector $\Pi_{\sym}^{(4)}$ give us again the symmetric projector on the new permuted space and we denote it with $\tilde{\Pi}_{\sym}^{(4)}$.
The term $c_{Q}$ can be rewritten as
\ba
\tr(K_{i_{1}}^{\otimes 4}QK_{i_{1}}^{\dag\otimes 4}Q\tilde{\Pi}_{\sym}^{(4)})&=&\frac{1}{16}\tr(RK^{\otimes 4}_{i_1}R^{-1}(Q_I+Q_X+Q_Y+Q_Z)RK^{\dag\otimes 4}_{i_1}R^{-1}(Q_I+Q_X+Q_Y+Q_Z)\tilde{\Pi}_{\sym}^{(4)})\nonumber\\&=&\frac{1}{16}\tr[(Q_I+Q_Z)(Q_I+Q_X+Q_Y+Q_Z)\tilde{\Pi}_{\sym}^{(4)}]\nonumber\\&+&\tr[K_{i_{1}}^{\otimes 4}(Q_X+Q_Y)K_{i_{1}}^{\dag\otimes 4}(Q_I+Q_X+Q_Y+Q_Z)\tilde{\Pi}_{\sym}^{(4)}]\label{kacqdec}
\ea
where we used the fact that $[K,I]=[K,Z]=0$; then the first term of Eq.$\eqref{kacqdec}$ reads
\be
\tr[(Q_I+Q_Z)(Q_I+Q_X+Q_Y+Q_Z)\tilde{\Pi}_{\sym}^{(4)}]=8\tr(Q\tilde{\Pi}_{\sym}^{(4)})=8D^{+}.
\ee
We focus now on the second term of Eq.$\eqref{kacqdec}$
\ba
\tr[K_{i_{1}}^{\otimes 4}(Q_X+Q_Y)K_{i_{1}}^{\dag\otimes 4}(Q_I+Q_X+Q_Y+Q_Z)\tilde{\Pi}_{\sym}^{(4)}]&=&\tr[(Q_{KXK^{\dag}}+Q_{KYK^{\dag}})(Q_I+Q_X+Q_Y+Q_Z)\tilde{\Pi}_{\sym}^{(4)}]\nonumber\\
\nonumber&=&2\tr[(Q_{KXK^{\dag}}+Q_{KXK^{\dag}X}+Q_{KXK^{\dag}Y}+Q_{KYK^{\dag}})\tilde{\Pi}_{\sym}^{(4)}]\\
&=&2\tr(\tilde{Q}\tilde{\Pi}_{\sym}^{(4)})=\frac{1}{12}\sum_{\sigma\in\mathcal{S}_4}\tr(\tilde{Q}\tilde{T}_{\sigma})
\ea
where denoted $\tilde{T}_{\sigma}=RT_{\sigma}R^{-1}$ and defined
\be
\tilde{Q}\equiv(Q_{KXK^{\dag}}+Q_{KXK^{\dag}X}+Q_{KXK^{\dag}Y}+Q_{KYK^{\dag}})
\ee
and
\ba
Q_{KXK^{\dag}X}&=&(KXK^{\dag}X)^{\otimes 4}\otimes Q_{d/2}, \\
Q_{KXK^{\dag}Y}&=&(KXK^{\dag}Y)^{\otimes 4}\otimes Q_{d/2}, \nonumber\\
Q_{KXK^{\dag}}&=&(KXK^{\dag})^{\otimes 4}\otimes Q_{d/2}\nonumber \\
Q_{KYK^{\dag}}&=&(KYK^{\dag})^{\otimes 4}\otimes Q_{d/2}\nonumber
\ea
Therefore
\be
c_{Q}=\tr(K^{\otimes 4}QK^{\dag\otimes 4}Q\Pi_{\sym}^{(4)})=\frac{D^{+}}{2}+\frac{\sum_{\sigma\in\mathcal{S}_4}\tr(\tilde{Q}\tilde{T}_{\sigma})}{192}
\ee
In a similar fashion  to what we have done in Sec. \ref{lemma} it is possible to see that $\tilde{T}_{\sigma}=T_{\sigma}^{(2)}\otimes T_{\sigma}^{(d/2)}$ where $T_{\sigma}^{(2)}\in \mathcal{B}((\mathbb{C}^2)^{\otimes 4})$ and $T_{\sigma}^{(d/2)}\in \mathcal{B}((\mathbb{C}^{2\otimes (N-1)})^{\otimes 4})$; thus, the following equality holds
\ba
&&\tr[(Q_{KXK^{\dag}}+Q_{KXK^{\dag}X}+Q_{KXK^{\dag}Y}+Q_{KYK^{\dag}})\tilde{\Pi}_{\sym}^{(4)}]=\\
&&\frac{1}{12}\sum_{\sigma\in\mathcal{S}_4}\tr[T_{\sigma }^{(2)}((KXK^{\dag})^{\otimes 4}+(KYK^{\dag})^{\otimes 4}-(KXK^{\dag}X)^{\otimes 4}-(KXYK^{\dag}Y)^{\otimes 4})]\tr(Q_{d/2}T_{\sigma}^{(d/2)})\nonumber
\ea 
It is easy to see that
\ba
(KXK^{\dag})^{\otimes 4}=\begin{pmatrix}0&e^{-i\theta}\\
e^{i\theta}&0 
\end{pmatrix}^{\otimes 4},&& \quad (KYK^{\dag})^{\otimes 4}=\begin{pmatrix}0&-ie^{-i\theta}\\
ie^{i\theta}&0
\end{pmatrix}^{\otimes 4},\\ (KXK^{\dag}X)^{\otimes 4}=\begin{pmatrix}e^{-i\theta}&0\\
0& e^{i\theta}
\end{pmatrix}^{\otimes 4}, && \quad
(KXK^{\dag}Y)^{\otimes 4}=\begin{pmatrix}
ie^{-i\theta}&0\\0&-ie^{i\theta}
\end{pmatrix}^{\otimes 4},\nonumber
\ea
With all the previous consideration we are now ready to compute the coefficients
\ba
1\times\tr(\tilde{Q})&=&16(\cos^4\theta+\sin^4\theta)\tr(Q_{d/2})=4(\cos^4\theta+\sin^4\theta)d^2\nonumber\\
6\times\tr(T_{(ij)}\tilde{Q})&=&8\cos^{2}(2\theta)\tr(T_{(ij)}Q_{d/2})=4d\cos^2(2\theta)\nonumber\\
8\times\tr(T_{(ijk)}\tilde{Q})&=&4\cos(4\theta)\\
6\times\tr(T_{ijkl}\tilde{Q})&=&8\cos^2(2\theta)\tr(T_{(ijkl)}Q_{d/2})=4d\cos^2(2\theta)\nonumber\\
3\times\tr(T_{(ij)(kl)}\tilde{Q})&=&4(3+\cos(4\theta))\tr(T_{(ij)(kl)}Q_{d/2})=d^2(3+\cos(4\theta))\nonumber
\ea
Thus $\frac{1}{12}\sum_{\sigma\in\mathcal{S}_4}\tr(\tilde{Q}\tilde{T}_{\sigma})=\frac{4(d+2)(3d+(d+4)\cos(4\theta))}{12}$ and
\ba
c_{Q}&=&\tr(K_{i_{1}}^{\otimes 4}QK_{i_{1}}^{\dag\otimes 4}Q\Pi_{\sym}^{(4)})=\frac{d^2+3d+2}{12}+\frac{(d+2)(3d+(d+4)\cos(4\theta))}{48}\\
\nonumber &=&\frac{(d+2)((d+4)\cos(4\theta)+7d+4)}{48}\nonumber\\
c_{QQ^{\perp}}&=&\tr(K_{i_{1}}^{\otimes 4}QK_{i_{1}}^{\dag\otimes 4}Q^{\perp}\Pi_{\sym}^{(4)})=\tr(Q\tilde{\Pi}_{\sym}^{(4)})-c_{Q}=\frac{(d+4)(d+2)}{24}\sin^2(2\theta)
\ea
Therefore
\ba
a_{k}&=&\frac{24 (f_{\theta}^{-})^k}{(d^2-1)(d+2)(d+4)}\left(\frac{d(d+3)}{4}\tr(\psi^{\otimes 4} Q)-1\right)\\
b_{k}&=&\frac{1}{D_{\sym}}+\frac{24}{(d^2-1)(d+2)(d+4)}(f_{\theta}^{-})^k\left(\frac{4}{d(d+3)}-\tr(\psi^{\otimes 4}Q)\right)
\ea
where $f_{\theta}^{-}$ is, cfr. \eqref{fpmtheta}
\be
f_{\theta}^{-}\equiv\frac{7d^2-3d+d(d+3)\cos(4\theta)-8}{8(d^2-1)}= \frac{7+\cos(4\theta)}{8}+\Theta(d^{-1})
\ee
It is possible to calculate the extreme points of $f_{\theta}$. the maximum is $f_{\theta}=1$ for $\theta=\pi/2$, while the minimum is $f_{\theta}\approx \frac{3}{4}$ for $\theta=\pi/4$.

\subsection{Proof of Lemma \ref{lemmafluc}}\label{prooflemmafluc}
The average square purity for the $k$-doped Clifford circuit can be written as\cite{Hammalungo_2012}
\be
\aver{\pur^2(\psi_{U})_A}_{U\in\mathcal{C}_k}=\tr\left(T^{(A)}_{(12)(34)}\aver{\psi^{\otimes 4}_{U}}_{U\in\mathcal{C}_k}\right)
\label{purfluctdoped}
\ee
Then, substituting Eq.\eqref{maintheorem}
\be
\aver{\pur^2(\psi_{U})_A}_{U\in\mathcal{C}_k}=a_{k}\tr(Q\Pi_{\sym}^{(4)}T_{(12)(34)}^{(A)})+b_{k}\tr(\Pi_{\sym}^{(4)}T_{(12)(34)}^{(A)})
\ee
where $a_k$ and $b_k$ are defined in Eq.\eqref{coefficientmaintheorem}. Recalling that $Q=Q_{A}\otimes Q_{B}$ up to a rearrangement of the tensor product (cfr. App. \ref{lemma}), then $T_{\sigma}=T^{(A)}_{\sigma}\otimes T_{\sigma}^{(B)}$ and $\Pi_{\sym}^{(4)}=1/24\sum_{\sigma\in\mathcal{S}_4}T_\sigma$, we write
\ba
\tr(Q\Pi_{\sym}^{(4)}T_{(12)(34)}^{(A)})&=&\frac{1}{24}\sum_{\sigma\in\mathcal{S}_4}\tr(Q_{A}T_{\sigma}^{(A)}T_{(12)(34)}^{(A)})\tr(Q_{B}T_{\sigma}^{(B)})\\
\tr(\Pi_{\sym}^{(4)}T_{(12)(34)}^{(A)})&=&\frac{1}{24}\sum_{\sigma\in\mathcal{S}_4}\tr(T_{\sigma}^{(A)}T_{(12)(34)}^{(A)})\tr(T_{\sigma}^{(B)})
\ea
after some long but trivial algebra one gets, for the case $d_A=d_B=\sqrt{d}$ and $\psi=\ket{0}\bra{0}^{\otimes N}$
\ba\label{purfluck}
\aver{\pur^2(\psi_{U})_A}_{U\in\mathcal{C}_k}&=&\frac{2(2d^2+9d+1)+(d^2-2d+1)(f^{-}_{\theta})^{k}}{(d+1)(d+2)(d+3)}\\
\aver{\pur^2(\psi_{U})_A}_{U\in\mathcal{U}(d)}&=&\frac{4 d^2+18 d+2}{(d+1) (d+2) (d+3)}
\ea 
and thus, by computing $\aver{\pur^2(\psi_{U})_A}_{U\in\mathcal{C}_k}-\aver{\pur(\psi_U)_A}_{U\in\haar}^{2}$ one  finds Eq.\eqref{fluctclifforddoped}. \qed

\section{Other Proofs}
\subsection{Notes on Pauli operators}
The Pauli operators on $\mathbb{C}^{2\otimes N}$ are formed by all Pauli strings
\be
P=p_{1}\otimes p_{2}\otimes\cdots\otimes p_N
\ee
where $p_{i}\in\{I,X,Y,Z\}$ are usual Pauli matrices on $\mathbb{C}^2$. They are unitary and hermitian operators. Moreover, the Pauli group forms a $1$-design, i.e the $(\mathcal{P}(d),1)$-fold channel of the Pauli group equals the $(\mathcal{U}(d),1)$-fold channel 
\be
\Phi^{(1)}_{\mathcal{P}(d)}(\mathcal{O})\equiv\frac{1}{d^2}\sum_{P\in\mathcal{P}(d)} P\mathcal{O}P=\frac{\tr(\mathcal{O})}{d}
\ee
Since Pauli operators commute or anticommute, we define for $P_{1},P_{2}\in\mathcal{P}(2^N)$
\be
P_{2}P_{1}P_{2}=K(P_{1},P_{2})P_{1} \quad K(P_{1},P_{2}):=\frac{1}{d}\tr(P_{2}^{\dag}P_{1}P_{2}P_{1})
\ee
where $K(P_{1},P_{2})$ is either $1$ or $-1$; a useful rule for combining them
\be
K(P_{1},P_{2})K(P_{1},P_{3})=\frac{1}{d}K(P_{1},P_{2}P_{3})=\frac{1}{d^2}\tr((P_{2}P_{3})^{\dag}P_{1}P_{2}P_{3}P_{1})
\ee
where $P_{1},P_2,P_{3}\in\mathcal{P}(2^N)$. The above facts are sufficient to prove the following lemma
\begin{lemma}\label{lemmaPauli} 
Let $\mathcal{O}\in\mathcal{B}(\mathcal{H}^{\otimes 4})$ and let $Q=d^{-2}\sum_{P\in\mathcal{P}(2^N)}P^{\otimes 4}$, then the following relations hold
\ba
\tr(\mathcal{O}Q)&=&\tr(\mathcal{O}QT_{(12)(34)})=\tr(\mathcal{O}QT_{(13)(24)})=\tr(\mathcal{O}QT_{(14)(23)})\label{154}\\
\tr(\mathcal{O}QT_{(24)})&=&\tr(\mathcal{O}QT_{(1432)})=\tr(\mathcal{O}QT_{(13)})=\tr(\mathcal{O}QT_{(1234)})\\
\tr(\mathcal{O}QT_{14})&=&\tr(\mathcal{O}QT_{(23)})=\tr(\mathcal{O}QT_{(1342)})=\tr(\mathcal{O}QT_{(1243)})\\
\tr(\mathcal{O}QT_{(12)})&=&\tr(\mathcal{O}QT_{(34)})=\tr(\mathcal{O}QT_{(1324)})=\tr(QT_{(1423)})\label{156}\\
\tr(\mathcal{O}QT_{(132)})&=&\tr(\mathcal{O}QT_{(234)})=\tr(\mathcal{O}QT_{(124)})=\tr(\mathcal{O}QT_{(143)})\\
\tr(\mathcal{O}QT_{(123)})&=&\tr(\mathcal{O}QT_{(142)})=\tr(\mathcal{O}QT_{(243)})=\tr(\mathcal{O}QT_{(134)})
\ea
\end{lemma}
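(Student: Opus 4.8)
The plan is to collapse the twenty‑four claimed identities to one algebraic statement about $Q$ — namely $QT_v=Q$ for every $v$ in the Klein four–group $V:=\{e,(12)(34),(13)(24),(14)(23)\}\subset\mathcal{S}_4$ — and then to prove that statement using only the orthogonality of the Pauli operator basis (the fact underlying the $1$-design property recalled above) together with $P^2=\bbbone$.

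First I would observe that the six quadruples of permutations in the six lines of the lemma are precisely the cosets of $V$ in $\mathcal{S}_4$ (left and right agree, since $V$ is normal): the first line is $V$ itself, and a one‑line check gives e.g. $V(13)=\{(13),(24),(1234),(1432)\}$, $V(12)=\{(12),(34),(1324),(1423)\}$, $V(132)=\{(132),(143),(234),(124)\}$, and similarly for the remaining two lines. Hence it suffices to show that $\pi\mapsto\tr(\mathcal{O}QT_\pi)$ is constant on every coset $V\sigma$. Granting $QT_v=Q$ (equivalently $T_vQ=Q$, by $[Q,T_\pi]=0$), this is immediate: writing $\pi=v\sigma$ with $v\in V$ and using $T_{v\sigma}=T_vT_\sigma$, one gets $\tr(\mathcal{O}QT_\pi)=\tr\bigl(\mathcal{O}(QT_v)T_\sigma\bigr)=\tr(\mathcal{O}QT_\sigma)$.

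To prove $QT_v=Q$, I would use that $Q$ is a Hermitian projector (it is Hermitian since the $P$ are Hermitian, and $Q^2=Q$ is given) whose range is the common $+1$-eigenspace $\mathcal{R}=\{\,|\phi\rangle:P^{\otimes4}|\phi\rangle=|\phi\rangle\ \forall P\in\mathcal{P}(2^N)\,\}$; the inclusion $\operatorname{range}(Q)\subseteq\mathcal{R}$ follows from $P^{\otimes4}Q=Q$, which holds because $(PP')^{\otimes4}=(\widetilde{PP'})^{\otimes4}$ — the Pauli phase ambiguity $\omega\in\{\pm1,\pm i\}$ satisfies $\omega^4=1$. Since $[Q,T_v]=0$, it is enough to check that $T_v$ acts as the identity on $\mathcal{R}$. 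The key input is the completeness relation $T_{(ij)}=d^{-1}\sum_{R\in\mathcal{P}(2^N)}R\otimes R$ on copies $i,j$. For $|\phi\rangle\in\mathcal{R}$, inserting $R^{\otimes4}|\phi\rangle=|\phi\rangle$ and using $R^2=\bbbone$ yields $(R\otimes R\otimes\bbbone\otimes\bbbone)|\phi\rangle=(\bbbone\otimes\bbbone\otimes R\otimes R)|\phi\rangle$; summing over $R$ gives $T_{(12)}|\phi\rangle=T_{(34)}|\phi\rangle$, and therefore $T_{(12)(34)}|\phi\rangle=T_{(12)}^2|\phi\rangle=|\phi\rangle$. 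Repeating the argument with the pairings $\{13\},\{24\}$ and $\{14\},\{23\}$ disposes of $(13)(24)$ and $(14)(23)$, so all of $V$ fixes $\mathcal{R}$ pointwise and $QT_v=Q$.

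I expect the only real (and mild) obstacle to be the bookkeeping in the last step: setting up $T_{(ij)}=d^{-1}\sum_R R\otimes R$ on the correct pair of tensor factors and verifying carefully that the Pauli phases do not spoil $P^{\otimes4}Q=Q$; after that the coset argument is purely formal. As an alternative one could bypass the coset picture entirely and verify all twenty‑four identities by brute force, expanding $\tr(\mathcal{O}QT_\pi)$ via $Q=d^{-2}\sum_P P^{\otimes4}$ and repeatedly applying $P_2P_1P_2=K(P_1,P_2)P_1$ together with the combining rule $K(P_1,P_2)K(P_1,P_3)=d^{-1}K(P_1,P_2P_3)$; but the $QT_v=Q$ route is cleaner and makes the underlying coset structure transparent.
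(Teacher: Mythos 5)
Your proof is correct, and it takes a genuinely different route from the paper's. The paper proves the lemma by brute force: it expands $\mathcal{O}$ in the Pauli operator basis, reduces to $\mathcal{O}=P_1\otimes P_2\otimes P_3\otimes P_4$, and evaluates $\tr(\mathcal{O}QT_\pi)$ case by case using the commutation signs $K(P_1,P_2)$ and their combination rule, exhibiting a few representative computations (e.g.\ $\tr(\mathcal{O}Q)=d^2\delta_{P_1P_2P_3P_4}$ and $\tr(\mathcal{O}QT_{(12)})=dK(P_2,P_3)\delta_{P_1P_2}\delta_{P_3P_4}$) and leaving the remaining permutations to the same procedure. You instead isolate the single operator identity $QT_v=Q$ for $v$ in the Klein four-group $V=\{e,(12)(34),(13)(24),(14)(23)\}$, proved via $\operatorname{range}(Q)\subseteq\{\ket{\phi}:P^{\otimes 4}\ket{\phi}=\ket{\phi}\}$ (the phase check $\omega^4=1$ is exactly the point where this works for $t=4$) and the swap identity $T_{(ij)}=d^{-1}\sum_R R\otimes R$, and then observe that the six quadruples in the lemma are precisely the cosets of the normal subgroup $V$, so constancy on each line follows formally from $T_{v\sigma}=T_vT_\sigma$ (and normality of $V$ takes care of any homomorphism/anti-homomorphism convention for $\sigma\mapsto T_\sigma$). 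Your coset identifications check out, and the chain $T_{(12)(34)}\ket{\phi}=T_{(12)}T_{(34)}\ket{\phi}=T_{(12)}^2\ket{\phi}=\ket{\phi}$ is valid. What each approach buys: yours is structural, explains \emph{why} exactly these quadruples appear (the map $\pi\mapsto\tr(\mathcal{O}QT_\pi)$ factors through $\mathcal{S}_4/V\cong\mathcal{S}_3$), and avoids all sign bookkeeping; the paper's computation, while less transparent, yields the explicit values of the traces in the Pauli basis, which is extra information not recovered by your argument. (Incidentally, the missing $\mathcal{O}$ in the last trace of Eq.\ \eqref{156} is a typo in the statement, which your reading correctly assumes.)
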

\begin{proof} In order to prove the above relationship we need the expansion of $\mathcal{O}$ in Pauli operators
\be
\mathcal{O}=\sum_{\substack{P_1,P_2\in\mathcal{P}(2^N)\\P_3,P_4\in\mathcal{P}(2^N)}}=\tr(\mathcal{O}P_{1}\otimes P_2\otimes P_3\otimes P_4)P_{1}\otimes P_2\otimes P_3\otimes P_4
\ee
At this point we can prove the above relations for $\mathcal{O}\equiv P_{1}\otimes P_2\otimes P_3\otimes P_4$ without loss of generality. We won't perform all the calculations, rather just some instructive examples. Let us prove Eq.\eqref{154}
\be
\tr[(P_1\otimes P_2\otimes P_3\otimes P_4)Q]=\frac{1}{d^2}\sum_{P\in\mathcal{P}(2^N)}\tr(P_1P)\tr(P_2P)\tr(P_3P)\tr(P_4P)=d^2\sum_{P\in\mathcal{P}(2^N)}\delta_{PP_1}\delta_{PP_2}\delta_{PP_3}\delta_{PP_4}=d^2\delta_{P_1P_2P_3P_4}
\ee
the second one
\ba
\tr[(P_1\otimes P_2\otimes P_3\otimes P_4)QT_{(12)(34)}]&=&d^{-2}\sum_{P\in\mathcal{P}(2^N)}\tr(P_1PP_2P)\tr(P_3PP_4P)\\&=&\frac{1}{d^2}\sum_{P\in\mathcal{P}(2^N)} d^2\delta_{P_1P_2}\delta_{P_3P_4}K(P,P_4)K(P,P_2)\nonumber\\&=&\delta_{P_1P_2}\delta_{P_3P_4}\sum_{P\in\mathcal{P}(2^N)} \frac{1}{d}\tr((P_2P_4)^\dag P(P_3P_4)P)=\delta_{P_1P_2}\delta_{P_2P_4}\tr(P_2P_4)^2\nonumber\\&=&d^2\delta_{P_1P_2P_3P_4}\nonumber
\ea
the same procedure follows for $T_{(13)(24)}$ and $T_{(14)(23)}$. Let us prove Eq.\eqref{156}
\ba
\tr[(P_1\otimes P_2\otimes P_3\otimes P_4)QT_{(12)}]&=&\frac{1}{d^2}\sum_{P\in\mathcal{P}(2^N)}\tr(P_1PP_2P)\tr(P_3P)\tr(P_4P)=\delta_{P_3P_4}\tr(P_1P_3P_2P_3)\nonumber\\
&=&dK(P_2,P_3)\delta_{P_3P_4}\delta_{P_1P_2}
\ea
the $4$-cycle
\ba
\tr[(P_1\otimes P_2\otimes P_3\otimes P_4)QT_{(1423)}]&=&\frac{1}{d^2}\sum_{P\in\mathcal{P}(2^N)}\tr(P_1PP_3PP_2PP_4P)=\frac{1}{d^2}\sum_{P\in\mathcal{P}(2^N)}K(P,P_4)K(P,P_3)\tr(P_1P_3P_2P_4)\nonumber\\&=&\frac{1}{d^3}\sum_{P\in\mathcal{P}(2^N)}K(P,P_4P_3)\tr(P_1P_3P_2P_4)=d\delta_{P_{3}P_{4}}\delta_{P_1P_2}\tr(P_1P_3P_2P_4)\nonumber\\&=&dK(P_2,P_3)\delta_{P_1P_2}\delta_{P_3P_4}
\ea
the same calculations follow for $T_{(34)}$ and $T_{(1324)}$. \end{proof}

\subsection{$Q$ decomposition and traces\label{lemma}}
It is interesting to prove a useful property of the operator $Q$, that we recall is defined as
\be 
Q=\frac{1}{d^2}\sum_{P\in \mathcal{P}(2^N)} P^{\otimes 4}=\frac{1}{d^2}\sum_{\sigma_{i_{1}},\ldots,\sigma_{i_{N}}}(\sigma_{i_{1}}\otimes\cdots\otimes \sigma_{i_{N}})^{\otimes 4}
\ee 
where $\sigma_{i_{j}} \in \mathcal{P}(2)$.
It is possible to introduce a permutation $S\in \mathcal{S}_{4N}$, whose action on a tensor product state is defined as
\be 
S(\ket{i_{1}\ldots i_{N}}^{\otimes 4})=\ket{i_{1}}^{\otimes 4}\cdots\ket{i_{N}}^{\otimes 4}
\label{sdefinition}
\ee 
The adjoint action of a permutation $S$  on $Q$ reads 
\ba
SQS^{-1}&=&\frac{1}{d^2}\sum_{\sigma_{i_{1}},\ldots,\sigma_{i_{N}}}S(\sigma_{i_{1}}\otimes\cdots\otimes \sigma_{i_{N}})^{\otimes 4}S^{-1}\\
&=&\left(\frac{1}{4}\right)^{N}\sum_{\sigma_{i_{1}},\ldots,\sigma_{i_{N}}}\sigma_{i_{1}}^{\otimes 4}\otimes\cdots\otimes \sigma_{i_{N}}^{\otimes 4}=Q_{2}^{\otimes N}
\label{qtensors}
\ea
where $Q_{2}=\sum_{\sigma\in\mathcal{P}(2)}\sigma^{\otimes 4}$. Then, let us show the adjoint action of $S$ on a permutation operator between the $4$-copies of $\mathcal{H}$. Let $T_{\sigma}\in\mathcal{B}(\mathcal{H}^{\otimes 4})$ a permutation operator between $4$ copies of $\mathcal{H}\equiv \mathbb{C}^{2\otimes N}$ corresponding to $\sigma\in\mathcal{S}_4$; written in terms of bras and kets it reads
\be
T_{\sigma}=\sum_{\substack{i_{1}\ldots i_{N}\\
j_{1}\ldots j_{N}}}\sum_{\substack{
k_{1}\ldots k_{N}\\
l_{1}\ldots l_{N}}}\ket{\sigma(i_{1})\ldots\sigma(i_{N})\sigma(j_{1})\ldots\sigma(j_{N})\sigma(k_{1})\ldots\sigma(k_{N})\sigma(l_{1})\ldots\sigma(l_{N})}\bra{i_{1}\dots i_{N} j_{1}\dots j_{N} k_{1}\dots k_{N} l_{1}\dots l_{N}}
\ee
The adjoint action of $S$ reads
\be 
ST_{\sigma}S^{-1}=\sum_{\substack{i_{1}\ldots i_{N}\\
j_{1}\ldots j_{N}}}\sum_{\substack{
k_{1}\ldots k_{N}\\
l_{1}\ldots l_{N}}}\ket{\sigma(i_{1})\sigma(j_{1})\sigma(k_{1})\sigma(l_{1})}\bra{i_{1}j_{1}k_{1}l_{1}}\otimes\cdots\otimes \ket{\sigma(i_{N})\sigma(j_{N})\sigma(k_{N})\sigma(l_{N})}\bra{i_{N}j_{N}k_{N}l_{N}}
\ee 
It is clear that we can write
\be
ST_{\sigma}S^{-1}=T_{\sigma}^{(2)\otimes N}
\label{ttensors}
\ee
where we are denoting $T_{\sigma}^{(2)}=\sum_{i,j,k,l}\ket{\sigma(i)\sigma(j)\sigma(k)\sigma(l)}\bra{ijkl}\in\mathcal{B}(\mathbb{C}^2)$ which is a permutation operator between $4$ copies of a single qubit Hilbert space $\mathbb{C}^2$.

\begin{lemma}\label{lemma1}
Let $K_{i}$ and $K_{j}$ two identical single qubit gates with support on a different qubit, $i$ and $j$ respectively. Let $T_{\sigma}$ be a permutation operator between the $4$-copies of $\mathcal{H}$; the following equality holds
\be
\tr(T_{\sigma}QK_{i}^{\otimes 4}QK_{i}^{\dag\otimes 4})=\tr(T_{\sigma}QK_{j}^{\otimes 4}QK_{j}^{\dag\otimes 4})
\ee
\end{lemma}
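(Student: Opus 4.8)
The plan is to reduce this $N$-qubit trace to a product of single-qubit-block traces by conjugating everything inside the trace with the regrouping permutation $S\in\mathcal{S}_{4N}$ of Eq.\eqref{sdefinition}, which gathers the four copies of each qubit into a common block. Since the trace is invariant under conjugation by $S$,
\be
\tr(T_{\sigma}QK_{i}^{\otimes 4}QK_{i}^{\dag\otimes 4})=\tr\!\left((ST_{\sigma}S^{-1})(SQS^{-1})(SK_{i}^{\otimes 4}S^{-1})(SQS^{-1})(SK_{i}^{\dag\otimes 4}S^{-1})\right),
\ee
and the key point is that each of the five factors on the right acquires a clean block form under this regrouping: by Eqs.\eqref{qtensors} and \eqref{ttensors} one has $SQS^{-1}=Q_{2}^{\otimes N}$ and $ST_{\sigma}S^{-1}=(T_{\sigma}^{(2)})^{\otimes N}$, while, because $K_{i}$ is supported on the single qubit $i$, the same manipulation used in the proof of Application~\ref{app4} (there with $R$, a special case of $S$) gives $SK_{i}^{\otimes 4}S^{-1}=\bbbone^{\otimes 4}\otimes\cdots\otimes K^{\otimes 4}\otimes\cdots\otimes\bbbone^{\otimes 4}$, with $K^{\otimes 4}$ sitting in the $i$-th block and $\bbbone^{\otimes 4}$ on every other block.

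With this in hand the operator inside the trace is a genuine tensor product over the $N$ blocks, so the trace factorizes. The $i$-th block contributes $\tr(T_{\sigma}^{(2)}Q_{2}K^{\otimes 4}Q_{2}K^{\dag\otimes 4})$, and each block $j\neq i$ contributes $\tr(T_{\sigma}^{(2)}Q_{2}Q_{2})=\tr(T_{\sigma}^{(2)}Q_{2})$, where the last equality uses that $Q_{2}$ is a projector, a consequence of $Q^{2}=Q$ together with $SQS^{-1}=Q_{2}^{\otimes N}$. Hence
\be
\tr(T_{\sigma}QK_{i}^{\otimes 4}QK_{i}^{\dag\otimes 4})=\tr(T_{\sigma}^{(2)}Q_{2}K^{\otimes 4}Q_{2}K^{\dag\otimes 4})\,\left[\tr(T_{\sigma}^{(2)}Q_{2})\right]^{N-1},
\ee
whose right-hand side depends only on $K$ and on $\sigma$, not on the qubit $i$ on which $K$ acts. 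Running the identical argument for $K_{j}$ produces exactly the same expression, which is the claimed equality.

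The only genuine obstacle is the conjugation bookkeeping: one must check \emph{simultaneously} that $S$ sends $Q$ to $Q_{2}^{\otimes N}$, $T_{\sigma}$ to $(T_{\sigma}^{(2)})^{\otimes N}$, and $K_{i}^{\otimes 4}$ to $K^{\otimes 4}$ acting on block $i$ with identities elsewhere, so that the five-factor product really is a single tensor product over the $N$ blocks and the trace splits as a product of $N$ block traces. Once that is in place, the projector identity $Q_{2}^{2}=Q_{2}$ — or, if one prefers, merely the observation that $\tr(T_{\sigma}^{(2)}Q_{2}^{2})$ is a fixed scalar independent of the block label — finishes the argument immediately.
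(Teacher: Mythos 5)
Your proof is correct and follows essentially the same route as the paper: conjugate inside the trace by the regrouping permutation $S$, use $SQS^{-1}=Q_{2}^{\otimes N}$, $ST_{\sigma}S^{-1}=(T_{\sigma}^{(2)})^{\otimes N}$, and the block form of $SK_{i}^{\otimes 4}S^{-1}$, then factorize the trace over the $N$ single-qubit blocks so the result manifestly does not depend on which block carries $K^{\otimes 4}$. Your closing remark that one only needs the off-block factor $\tr(T_{\sigma}^{(2)}Q_{2}^{2})$ to be the same scalar in both cases is a nice (if unnecessary) robustness observation beyond the paper's use of $Q_{2}^{2}=Q_{2}$.
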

\begin{proof} First of all
\be
K_{i}^{\otimes 4}=(\bbbone_{1}\otimes\dots\otimes \bbbone_{i-1}\otimes K\otimes \bbbone_{i+1}\otimes \dots \otimes \bbbone_{N})^{\otimes 4}
\ee
Acting adjointly with the permutation operator $S\in \mathcal{S}_{4N}$, defined in Eq.\eqref{sdefinition} on $K_{i}^{\otimes 4}$ we have
\be
SK_{i}^{\otimes 4}S^{-1}=(\bbbone_{1}^{\otimes 4}\otimes\dots\otimes \bbbone_{i-1}^{\otimes 4}\otimes K^{\otimes 4}\otimes \bbbone_{i+1}^{\otimes 4}\otimes \dots \otimes \bbbone_{N}^{\otimes 4})
\ee
Then from the above equality and from Eq.\eqref{qtensors} and Eq.\eqref{ttensors} we have
\ba
\tr(T_{\sigma}QK_{i}^{\otimes 4}QK_{i}^{\dag\otimes 4})&=&\tr(T_{\sigma}^{(2)\otimes N}Q_{2}^{\otimes N}(SK_{i}^{\otimes 4}S^{-1})Q_{2}^{\otimes N}(SK_{i}^{\dag\otimes 4}S^{-1}))\\&=&\tr(T_{\sigma}^{(2)}Q_{2}KQ_{2}K^{\dag})\tr(T_{\sigma}^{(2)\otimes (N-1)}Q_{2}^{\otimes (N-1)})\nonumber
\ea
where we used the fact that $(Q_{2}^{\otimes N-1})^2=Q_{2}^{\otimes N-1}$. From the above relation it's clear that the position of the qubit on which $K$ applies does not play any particular role. \end{proof}

\begin{application}\label{averagetrq}
Let $\psi = \otimes_i\psi_i$  a completely factorized random product state on $\mathbb{C}^{2\otimes N}$. Then
\be
\tr(\psi^{\otimes 4}Q)=d^{-1-(\log_25-2)}
\ee
\end{application}
\begin{proof}
Let us calculate $\tr(Q\psi^{\otimes 4})$ in the case $\psi=\ket{0}\bra{0}^{\otimes N}$. As proven in Sec. \ref{lemma}
\be
SQS^{-1}=(Q_{2})^{\otimes N}
\ee
where $S$ is a permutation operator defined in Eq.\eqref{sdefinition}. Here $Q_2$ reads
\be
Q_{2}=\frac{1}{4}(I+X+Y+Z)
\ee
therefore
\be
\tr(Q\psi^{\otimes 4})=\langle0|Q_2|0\rangle^{N}=\frac{1}{2^{N}}=d^{-1}
\ee
Now let us average $\tr(\psi^{\otimes 4}Q)$ with the local-qubit Haar average. Let $\psi_{loc}^{\otimes 4}$ be
\be
\psi_{loc}^{\otimes 4}=\int \prod_{i=1}^{N}\de U_{i}\left(\bigotimes_{i=1}^{N}U_{i}\psi_{i}U_{i}^{\dag}\right)^{\otimes 4}
\ee
where $\operatorname{supp}(U_{i})=\mathbb{C}^2$ for any $i$. Using the Haar average formulas displayed in Sec. \ref{haarintegration} the adjoint action of $S$ on $\psi_{loc}^{\otimes 4}$, defined in Eq.\eqref{sdefinition}, reads
\be
S\psi_{loc}^{\otimes 4 }S^{-1}=\left(\frac{\Pi_{\sym}^{(4)}}{D_{\sym}}\right)^{\otimes N}
\ee
where $\text{supp}(\Pi_{\sym}^{(4)})=\mathbb{C}^2$ and $D_{\sym}^{(4)}=(2\cdot 3\cdot 4\cdot 5)/24$. Therefore $\tr(\psi_{loc}^{\otimes 4}Q)$ is
\be
\tr(\psi^{\otimes 4} Q)=\tr\left(\left(\frac{\Pi_{\sym}^{(4)}}{D_{\sym}}\right)^{\otimes N}Q_{2}^{\otimes N}\right)=5^{-N}(\tr(\Pi_{\sym}^{(4)}Q_2))^{N}=\left(\frac{2}{5}\right)^{N}=d^{-1-(\log_25-2)}
\label{randomproductstate}
\ee
\end{proof}

\end{document}